\newcommand\product{\ensuremath{\ast}\xspace}
\newcommand\DA{\textbf{DA}\xspace}
\newcommand\LDA{\textbf{LDA}\xspace}
\newcommand\D{\textbf{D}\xspace}
\newcommand\V{\textbf{V}\xspace}
\newcommand\DAD{\DA\product\D}
\newcommand{\mso}{\ensuremath{\textup{MSO}(<)}\xspace}
\newcommand{\fow}{\ensuremath{\textup{FO}(<)}\xspace}
\newcommand{\fown}{\ensuremath{\textup{FO}(<,\suc)}\xspace}
\newcommand\suc{\ensuremath{+1}}
\newcommand{\fodw}{\ensuremath{\textup{FO}^2(<)}\xspace}
\newcommand{\fodwn}{\ensuremath{\textup{FO}^2(<,\suc)}\xspace}
\newcommand{\efgame}{Ehrenfeucht-Fra\"iss\'e\xspace}
\newcommand{\croch}[1]{\left\lfloor #1 \right\rfloor}
\newcommand{\nat}{\ensuremath{\mathbb{N}}\xspace}
\newtheorem{theorem}{Theorem}
\newtheorem{proposition}[theorem]{Proposition}
\newtheorem{lemma}[theorem]{Lemma}
\newtheorem{claim}[theorem]{Claim}
\newtheorem{fact}[theorem]{Fact}
\begin{document}

\title{Decidable Characterization of \fodwn and locality of DA}
\author[1]{Thomas Place}
\author[2]{Luc Segoufin}
\affil[1]{Bordeaux University, Labri}
\affil[2]{INRIA \& ENS ULM, Valda}

\maketitle

\begin{abstract}
Several years ago Thérien and Wilke exhibited a decidable
characterization of the languages of words that are definable
in \mbox{\fodwn}~\cite{fodeux}. Their proof relies on three separate
ingredients. The first one is the characterization of the languages
that are definable in \fodw as those whose syntactic semigroup belongs
to the variety \DA. Then, this result is combined with a \emph{wreath
product argument} showing that being definable in \fodwn corresponds
to having a syntactic semigroup in \DAD. Finally, proving that
membership of a semigroup in \DAD is decidable requires a third
ingredient: the ``locality'' of \DA, a result proved
in~\cite{almeida}. In this note we present a new self-contained and
simple proof that definability in \fodwn is decidable. We obtain the
locality of \DA as a corollary.
\end{abstract}

\section{Introduction}

Regular languages form a robust class of languages characterized by
completely dif and only iferent equivalent formalisms such as automata, finite
semigroups or monadic second-order logic, \mso. In particular, the
connection between \mso definability and recognizability by semigroups
has been used to investigate the expressive power of fragments of
\mso. For this purpose, finding \emph{decidable characterizations} of
such fragments often serves as a yardstick. A \emph{decidable
  characterization} is an algorithm which, given as input a regular
language, decides whether it can be defined in the fragment under
investigation. More than the algorithm itself, the main motivation is
the insight given by its proof. Indeed, in order to prove a decidable
characterization, one needs to consider and understand \emph{all}
properties that can be expressed in the fragment.

Usually a decidable characterization is presented by exhibiting a
variety of semigroups \V such that a language is definable in the
fragment if and only if its syntactic semigroup is in \V. Ideally, membership
of a semigroup in \V is defined as a finite set of equations that
need to be satisfied by all elements of the semigroup. Since the
syntactic semigroup of a language is a finite canonical object that can
effectively be computed from any representation of the language, this
yields decidability. The most striking example, known as
McNaughton-Papert-Sch\"utzenberger's Theorem~\cite{sfo,mnpfo}, is the characterization
of first-order logic equipped with a predicate "$<$" denoting the
linear-order over words, \fow. The result states that a regular language is
definable in \fow if and only if its syntactic semigroup is aperiodic
(i.e. satisfies the identity $s^\omega = s^{\omega+1}$ where $\omega$ is the
size of the syntactic semigroup).

Another successful story is the two-variable fragment of \fow.  Actually two
fragments are of interest: \fodw and \fodwn. \fodw is a restriction of \fow
where only two variables may be used (and reused). \fodwn is then obtained by
adding a predicate ''\suc'' for the successor relation. Note that in full
first-order logic, ''\suc'' can be defined from the order ''$<$.'' However,
this requires more than two variables and therefore \fodwn is strictly more
expressive than \fodw.

In~\cite{fodeux}, Thérien and Wilke proved characterizations for both
\fodw and \fodwn. They show that a language is definable in \fodw 
(resp. \fodwn) if and only if its syntactic semigroup is in the variety \DA
(resp. \DAD). However, the arguments used for proving that these two
characterizations are decidable, are very dif and only iferent. For \fodw, this
is immediate as \DA is defined by an equation: a semigroup belongs to
\DA if it satisfies $(st)^\omega t(st)^\omega = (st)^\omega$\footnote{The
authors of~\cite{fodeux} actually use the identity $(str)^\omega
t(str)^\omega = (str)^\omega$ as the definition of \DA. We use here a
simpler identity that is equivalent to it, see for
instance\cite{DK07}.}.

On the other hand, the variety \DAD is constructed from the varieties
\DA and \D using an agebraic product called the \emph{wreath product}
(''\product''). The advantage of this definition is that Thérien and
Wilke are able to obtain their characterization of \fodwn (with \DAD)
as a consequence of their characterization of \fodw (with \DA) using
a an algebraic argument known as the \emph{wreath product principle}.
The downside is that \DAD is not defined using identities and
decidability of its membership is not immediate. In fact there exist
varieties \V with decidable membership such that membership in \V 
\product \D is undecidable\cite{Auinger10}. The special case of \DAD
is solved using the \emph{locality} of \DA, established
in~\cite{almeida}. It follows from the locality of \DA that
$\DAD=\LDA$ where \LDA is the variety of semigroups $S$ such that for
all idempotents $e$ of $S$, $eSe$ is a semigroup in \DA. From this
definition, identities characterizing \LDA can be derived 
from those of \DA: $(esete)^\omega t(esete)^\omega = (esete)^\omega$
(where $e$ is an idempotent) and the decidability of its membership
follows.

In this paper we present a new proof of the characterization of \fodwn
by taking a dif and only iferent approach. We directly show that a language is
definable in \fodwn if and only if its syntactic semigroup satisfies the identity
$(esete)^\omega s(esete)^\omega = (esete)^\omega$. Our proof remains
simple and relies only on elementary combinatorial arguments. We
essentially show that when the equation holds one can reduce the
problem of constructing an \fodwn formula for the language to
constructing an \fodw formula for another language over a modified 
alphabet.

The paper is organized as follows. We start with the necessary
notations.
 The key
part is Section~\ref{sec-fodwn} where we prove that the identity ensures
definability in \fodwn. In Section~\ref{sec-necessary} we give a
standard game argument showing that the equation is implied by
definability in \fodwn.

\section{Notations}

\noindent
{\bf Words and Languages.} We fix a finite alphabet $A$. We denote by $A^+$ the set of all nonempty finite words and by $A^{*}$ the set of all finite words over $A$. We denote the empty word by $\varepsilon$. If $u,v$ are words, we denote by $u \cdot v$ or by $uv$ the word obtained from the concatenation of $u$ and $v$.

For convenience, we only consider languages that do not contain the empty word. That is, a \emph{language} is a subset of $A^*$. In this paper, we consider regular languages, i.e., languages that can be defined by a \emph{nondeterministic finite automata}~(NFA). In the paper, we work with the algebraic representation of regular languages in terms of monoids.

\medskip
\noindent
{\bf Semigroups and Monoids.} A \emph{semigroup} is a set $S$ equipped with an associative operation $s\cdot t$ (often written $st$). A \emph{monoid} is a semigroup $M$ having a neutral element $1_M$, i.e., such that $s\cdot1_M=1_M\cdot s=s$ for all $s\in M$. 

An element $e$ of a semigroup is \emph{idempotent} if $e^2=e$. Given a \emph{finite} semigroup $S$, it is folklore and easy to see that there is an integer $\omega(S)$ (denoted by $\omega$ when $S$ is understood) such that for all $s$ of $S$, $s^\omega$ is idempotent.

Observe that the set $A^*$ equipped with the concatenation operation is a monoid (the neutral element is the empty word ``$\varepsilon$''). Given a monoid $M$ and a morphism $\alpha: A^* \to M$, we say that a language \emph{$L$ is recognized by $\alpha$} if there exists $F \subseteq M$ such that $L = \alpha^{-1}(F)$. It is well known that a language is regular if and only if it can be recognized by a morphism into a \emph{finite} monoid. Finally, from any NFA recognizing some language $L$, one can compute a canonical morphism $\alpha: A^* \to M$ into a finite monoid recognizing $L$: the \emph{syntactic morphism} of $L$ ($M$ is the transition monoid of the minimal deterministic automaton recognizing it). Additionally, the monoid $M$ is called the \emph{syntactic monoid} of $L$ and the semigroup $S = \alpha(A^+)$ is called the \emph{syntactic semigroup} of $L$.

\medskip
\noindent
{\bf Logic.} As usual a word can be seen as a logical structure whose domain is the sequence of positions in the word. We work with unary predicates $P_a$ for all $a \in A$ denoting positions carrying the letter $a$ and two binary predicates $\suc$ and $<$ denoting the successor relation and the order relation among positions. First-order logic is then defined as usual and we denote by \fodw the two variable restriction of \fow and by \fodwn the two variable restriction of \fown. We shall use the two following classical closure properties of \fodw.

\begin{lemma} \label{lem:union}
	Let $A$ be an alphabet and $K,L \subseteq A^*$ which are definable in \fodw. Then, $K \cup L$ is definable in \fodw.
\end{lemma}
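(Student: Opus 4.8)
The plan is to exploit the fact that \fodw, like any fragment defined by a fixed bound on the number of variables, is closed under Boolean connectives applied at the sentence level. Concretely, I would start from two sentences $\varphi_K$ and $\varphi_L$ of \fodw defining $K$ and $L$ respectively, guaranteed to exist by the hypothesis. Since $K \cup L = \{w : w \models \varphi_K \text{ or } w \models \varphi_L\}$, the natural candidate for a defining sentence is the disjunction $\varphi_K \vee \varphi_L$, and the whole content of the proof is to check that this disjunction is again a legal \fodw sentence.

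The one point that genuinely needs attention — the ``main obstacle,'' though a mild one — is the variable budget. Naively one might fear that combining two formulas could double the number of variables in use. I would dispel this by recalling that \fodw permits reuse of its two variables: since $\varphi_K$ and $\varphi_L$ are \emph{sentences} they have no free variables, so after renaming bound variables (if necessary) both can be taken to quantify over the same pair of variable names, say $x$ and $y$. The disjunction $\varphi_K \vee \varphi_L$ then mentions only $x$ and $y$, and each quantifier in it still binds one of these two variables exactly as it did in the original formula; no new variable is introduced by the connective $\vee$ itself.

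Having made this observation, the verification is immediate: $\varphi_K \vee \varphi_L$ is a formula of first-order logic using the predicates $P_a$, $<$ (and possibly equality) and at most the two variables $x$ and $y$, hence a bona fide \fodw sentence, and by construction a word satisfies it precisely when it lies in $K \cup L$. I would therefore conclude that $K \cup L$ is definable in \fodw. I expect the entire argument to be a few lines, the only substantive remark being the closure of the two-variable fragment under disjunction via variable reuse.
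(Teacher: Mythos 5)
Your proposal is correct and matches the paper's proof, which simply takes the disjunction of the two defining sentences; your additional remark that bound-variable renaming keeps the formula within the two-variable budget is a harmless elaboration of the same idea.
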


\begin{proof}
	Immediate: we may combine formulas defining $K$ and $L$ using disjunction.
\end{proof}

\begin{lemma} \label{lem:concat}
	Let $A$ be an alphabet and $a \in A$ a letter. Let $K \subseteq (A \setminus \{a\})^*$ and $L \subseteq A^*$ which are definable in \fodw. Then, $KaL$ and $LaK$ are definable in \fodw.	
\end{lemma}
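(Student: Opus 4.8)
The plan is to prove that $KaL$ is definable in $\fodw$ (the argument for $LaK$ is symmetric), and the key idea is that the letter $a$ acts as a \emph{separator} that lets a two-variable formula localize its quantification to one side or the other. Since $K \subseteq (A \setminus \{a\})^*$, any word in $KaL$ contains at least one occurrence of $a$, and the relevant occurrence is the \emph{first} one: everything strictly before it is an $a$-free word that must lie in $K$, and everything strictly after it must lie in $L$. So I would start by fixing \fodw formulas $\varphi_K$ and $\varphi_L$ defining $K$ and $L$, and then build a formula $\psi$ for $KaL$ that (i) asserts the existence of a first $a$-position, (ii) evaluates a relativized copy of $\varphi_K$ on the prefix before it, and (iii) evaluates a relativized copy of $\varphi_L$ on the suffix after it.

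The technical heart is \emph{relativization}: from a \fodw formula $\varphi$ I want a \fodw formula that says ``$\varphi$ holds on the subword consisting of all positions $<x$'' (and similarly $>x$), where $x$ names the separating position. First I would introduce a fresh quantifier bound: I can syntactically relativize $\varphi$ by replacing each subformula $\exists y\,\theta$ by $\exists y\,((y < x) \wedge \theta)$ and each $\forall y\,\theta$ by $\forall y\,((y<x) \to \theta)$, and dually for the suffix. The crucial point to check is that this relativization stays within \emph{two} variables: the position $x$ is a free variable of the relativized formula, while $\varphi$ itself only ever uses its own two variables, so I would verify that in \fodw one can keep $x$ as a free variable throughout while $\varphi$ reuses the same two variable names internally. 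Because \fodw allows reusing variables, and because $x$ appears only in the guards $(y<x)$ and never needs to be compared with two other simultaneously-live variables, the relativized formulas remain in \fodw.

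The one subtlety I expect to be the main obstacle is pinning down \emph{which} $a$-position to relativize around so that the decomposition $w = u\,a\,v$ with $u \in (A\setminus\{a\})^*$ and $v \in A^*$ is forced. I would handle this by defining $x$ to be the unique first $a$-position, expressed in \fodw as $P_a(x) \wedge \forall y\,(y < x \to \neg P_a(y))$. This guarantees the prefix strictly below $x$ is $a$-free, matching the constraint $K \subseteq (A\setminus\{a\})^*$; the relativized $\varphi_K$ then certifies membership in $K$, and the relativized $\varphi_L$ certifies that the suffix lies in $L$. Assembling these pieces, the final formula is
\[
\psi \;=\; \exists x\,\Bigl(P_a(x) \wedge \forall y\,(y<x \to \neg P_a(y)) \wedge \varphi_K^{<x} \wedge \varphi_L^{>x}\Bigr),
\]
where $\varphi_K^{<x}$ and $\varphi_L^{>x}$ are the relativizations described above; one then checks $w \models \psi$ if and only if $w \in KaL$. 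The remaining work is the routine verification that relativization preserves semantics on the appropriate subword and the bookkeeping that confirms the whole formula uses only two variables.
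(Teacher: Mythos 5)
Your overall strategy is the same as the paper's: split the word at the first occurrence of $a$, check the prefix against $K$ and the suffix against $L$ via relativized formulas. The gap is in the relativization mechanism itself, which you correctly single out as ``the crucial point to check'' but then resolve incorrectly. You propose to keep the separator position $x$ as a free variable and guard every quantification of $\varphi_K$ by $(y<x)$. But $\varphi_K$ is a genuine two-variable formula: it must re-quantify \emph{both} variable names, including $x$. The first time the relativized $\varphi_K$ quantifies $\exists x\,\theta$, the separator is shadowed, and every subsequent guard $(y<x)$ compares $y$ against the wrong position. Your claim that $x$ ``never needs to be compared with two other simultaneously-live variables'' misses that the separator must stay live across the entire evaluation of $\varphi_K$, which already consumes both available names. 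There is no third name to give it, so the construction as written is not an \fodw formula.

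The repair is exactly where the hypothesis $K \subseteq (A\setminus\{a\})^*$ earns its keep, and it is the paper's one real idea here: because the separator is the \emph{first} $a$, membership of a position $y$ in the prefix is the \emph{unary} property ``no position $\leq y$ carries an $a$,'' and membership in the suffix is ``some position $<y$ carries an $a$.'' These are expressible by locally re-quantifying the other variable, e.g.\ $\forall x\,(x\leq y \to \lnot P_a(x))$, so no free separator variable is needed at all; one guards every quantification of $\varphi_K$ (resp.\ $\varphi_L$) by this unary test. You have all the ingredients for this fix in your own write-up --- you note that the prefix is $a$-free and that $x$ is the first $a$-position --- but the formula $\psi$ you actually exhibit does not use them and does not stay within two variables. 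As stated, the proof is incomplete.
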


\begin{proof}
	We show that $KaL$ is definable in \fodw (the proof for $LaK$ is symmetrical). By hypothesis we have \fodw formulas $\psi$ and $\Gamma$ which define $K$ and $L$ respectively. Since $K \subseteq (A \setminus \{a\})^*$ by construction, a formula $\varphi$ defining $KaL$ is as follows: 
	\[
	\varphi = \exists x\ P_a(x) \land \psi^{\leq} \land \Gamma^{\geq},
	\]
	where $\psi^{\leq}$ is constructed from $\psi$ by replacing all quantifications $\exists y$ by $\exists y (\forall x \leq y \lnot P_a(x))$ while $\Gamma^{\geq}$ is constructed from $\Gamma$ by replacing all quantifications $\exists y$ by $\exists y (\exists x < y  P_a(x))$. It follows from the definitions that $\varphi$ defines $KaL$.
\end{proof}

\section{Characterization of \fodwn}
\label{sec-fodwn}
In this section we prove the characterization of \fodwn:

\begin{theorem} \label{carac-fodwn}
A regular word language $L$ is definable in \fodwn if and only if its syntactic semigroup $S$ satisfies, for all $s,t,e \in S$ with $e$ idempotent:
\begin{equation} \label{eqdad}
(esete)^{\omega}=(esete)^{\omega}t(esete)^{\omega}
\end{equation}
\end{theorem}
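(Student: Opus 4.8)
The plan is to establish the two implications separately, dispatching the ``only if'' direction by a routine \efgame argument and concentrating the real effort on the ``if'' direction, which I would handle by a reduction that relegates the successor relation to an enlarged alphabet and then appeals to \fodw.

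For necessity, suppose $L$ is definable by an \fodwn formula of quantifier rank $k$, let $\alpha\colon A^*\to M$ be its syntactic morphism and $S=\alpha(A^+)$. Fix $s,t,e\in S$ with $e$ idempotent and choose words $x,y,z\in A^+$ with $\alpha(x)=s$, $\alpha(y)=t$, $\alpha(z)=e$; since $e$ is idempotent I may replace $z$ by a suitable power so that $\alpha(z)=\alpha(z^2)$. Writing $B_0=zxzyz$, so that $\alpha(B_0)=esete$, it suffices by the definition of the syntactic semigroup to show that for all $u,v\in A^*$ and every large multiple $N$ of $\omega$,
\[
u\,B_0^{N}\,v\in L\iff u\,B_0^{N}\,y\,B_0^{N}\,v\in L ,
\]
since the two sides have images $(esete)^\omega$ and $(esete)^\omega t(esete)^\omega$. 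I would prove this by exhibiting a winning strategy for Duplicator in the $k$-round \efgame for \fodwn on the two words, valid once $N$ is large compared to $k$: with many identical copies of $B_0$ on each side Duplicator can always answer a move of Spoiler by a position carrying the same letter, the same immediate predecessor and successor letters, and the same relative order, so the two words satisfy the same rank-$k$ formulas. This is the content of Section~\ref{sec-necessary}.

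The substance is the ``if'' direction. Assume $S$ satisfies \eqref{eqdad}. As $L$ is a finite union of classes $\alpha^{-1}(m)$ and \fodw is closed under union (Lemma~\ref{lem:union}), it is enough to define each $\alpha^{-1}(m)$ in \fodwn. The strategy is to push all uses of successor into the alphabet: I would fix a window length $\ell$ depending on $S$ and relabel each position of a word by the letter it carries together with the length-$\ell$ factor ending at it (with a fixed convention at the left end), producing a word over an enriched alphabet $B$. This relabeling $w\mapsto\widehat w$ is a bijection onto the set of \emph{consistent} $B$-words, and it is faithfully \fodwn-definable: a formula may inspect a bounded neighbourhood of a position using the successor relation, so every atomic predicate of $B$ translates back into an \fodwn formula over $A$ describing the corresponding local pattern, while order and the range of quantifiers are preserved. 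I would then design a morphism $\gamma\colon B^*\to N$ with $\gamma\circ\widehat{(\cdot)}=\alpha$ on consistent words and argue that $N=\gamma(B^+)$ satisfies the plain \DA identity $(pq)^\omega q(pq)^\omega=(pq)^\omega$. Granting this, $\gamma^{-1}(m)$ is \fodw-definable over $B$ --- which I would obtain from the correspondence between \DA and \fodw, reconstructed by an induction on the alphabet that assembles formulas from unambiguous marked concatenations using Lemmas~\ref{lem:union} and~\ref{lem:concat} --- and translating this formula back along the definable relabeling yields an \fodwn formula for $\alpha^{-1}(m)=\{w:\widehat w\in\gamma^{-1}(m)\}$.

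The hard part is the semigroup step: showing that localizing $S$ by a bounded window converts the \LDA-style identity \eqref{eqdad} into the genuine \DA identity for $N$. The point is that consecutive enriched letters must agree on their overlapping windows, so every product arising in $N$ is forced to be bracketed by the idempotents that the window records at block boundaries; under this bracketing the factor $e$ of \eqref{eqdad} materializes and the equation collapses to $(pq)^\omega q(pq)^\omega=(pq)^\omega$. Making this precise --- choosing $\ell$ correctly, formalizing consistency of $B$-words, and checking that the bracketing idempotent is exactly the one supplied by the window --- is where the combinatorial work lies, and it is here that the hypothesis that $e$ is \emph{idempotent} in \eqref{eqdad} is used in an essential way. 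Since this argument derives the \DA membership of the enriched semigroup directly from \eqref{eqdad}, rather than invoking the locality of \DA, I expect the locality statement $\DAD=\LDA$ to drop out as a corollary.
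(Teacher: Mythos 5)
Your overall architecture matches the paper's: necessity by an \efgame argument (yours is essentially identical to Section~\ref{sec-necessary}), and sufficiency by enriching the alphabet so that the successor relation is absorbed into the letters and \fodw suffices over the new alphabet. The necessity half is fine.

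The sufficiency half has a genuine gap at exactly the step you flag as ``the hard part'', and the gap is not one of missing detail but of a claim that is false as stated. You propose a morphism $\gamma\colon B^*\to N$ with $\gamma\circ\widehat{(\cdot)}=\alpha$ on consistent words and assert that $N=\gamma(B^+)$ satisfies the \DA identity. But $N$ is generated by the images of \emph{single} enriched letters, and arbitrary products of these do not correspond to consistent words; for any $\gamma$ compatible with $\alpha$ letter by letter, $N$ contains (a divisor of) $S$ itself, so it cannot satisfy the \DA identity unless $S$ already did. The idempotent bracketing you invoke is only available for images of \emph{long consistent} words, and those images are not closed under product (concatenating two consistent words is not consistent unless the windows match). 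The correct algebraic object is therefore not a semigroup but the delay category whose objects are the possible windows, and the assertion that ``all local semigroups $eSe$ in \DA forces the whole derived object into \DA'' is precisely the locality of \DA --- the theorem of~\cite{almeida} whose difficulty motivates this paper. So your plan does not avoid locality; it restates it as an unproved lemma, and your one-sentence justification does not prove it.

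The paper sidesteps this entirely: it never claims that any semigroup over $B$ is in \DA. Instead (i) the map $\eta$ is a \emph{block} decomposition, not a sliding window --- a pigeonhole argument manufactures, for each cut point, an idempotent $e$ with $\alpha(a_k\cdots a_i)=\alpha(a_k\cdots a_i)e$, and these idempotents are written into the letters of $B$ as guards $(e,s,f)$ with $\beta((e,s,f))=esf$; and (ii) for each $s$ it builds an \fodw language that merely \emph{coincides with} $\beta^{-1}(s)$ \emph{over well-formed words}, by a direct induction (Lemma~\ref{lem:main3}) on $(|C|,|t_1M|,|Mt_2|)$ that carries the guards $e_1,e_2$ along and applies equation~\eqref{eqdad} only in contexts where the bracketing idempotents are genuinely present (Lemmas~\ref{lem:basecase} and~\ref{lem:theletter}). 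This relativization to well-formed words is the device that replaces the locality theorem; without it, or without a full proof of locality, your argument does not go through. (Your reliance on $\DA\Rightarrow\fodw$ as a black box is acceptable in principle, but note the paper is self-contained precisely because Lemma~\ref{lem:main3} re-does that induction in the relativized setting.)
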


There are two directions to prove. That~\eqref{eqdad} is necessary follows from a classical \efgame argument. We state it in the next proposition whose proof is is postponed to Section~\ref{sec-necessary}.

\begin{proposition} \label{prop-onlyif-fodwn}
If a language $L$ is definable in \fodwn, its syntactic semigroup
satisfies~\eqref{eqdad}.
\end{proposition}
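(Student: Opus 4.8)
The plan is to reduce~\eqref{eqdad} to a single instance of the standard two-pebble \efgame for \fodwn. Recall the classical fact that if Duplicator wins the $k$-round game with two pebbles on two words $W_1,W_2$ --- the winning condition after each round being that the partial map between the at most two pebbled positions of each word preserves the letters and the two predicates $<$ and \suc --- then $W_1$ and $W_2$ satisfy the same \fodwn sentences of quantifier rank at most $k$. Assume $L$ is defined by such a sentence and let $k$ be its rank. Fix $s,t,e\in S$ with $e$ idempotent, let $\alpha$ be the syntactic morphism, and choose $u,v,w\in A^+$ with $\alpha(u)=s$, $\alpha(v)=t$, $\alpha(w)=e$. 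As $e$ is idempotent we have $\alpha(w^j)=e$ for all $j\ge 1$, so we may replace $w$ by a power and assume $|w|\ge k$; these long factors mapping to $e$ are the ``buffers'' used below.

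Set $g=wuwvw$, so $\alpha(g)=esete$, and fix $N$ a large multiple of $\omega$, so that $\alpha(g^N)=(esete)^\omega$ (because $(esete)^\omega$ is idempotent) while $\alpha(g^Nvg^N)=(esete)^\omega\,t\,(esete)^\omega$. By definition of the syntactic morphism, two words have the same image under $\alpha$ exactly when they are interchangeable in every context with respect to membership in $L$; hence~\eqref{eqdad} is equivalent to the statement that $x g^N y\in L\iff x g^N v g^N y\in L$ for all $x,y\in A^*$. Writing $W_1=xg^Ny$ and $W_2=xg^Nvg^Ny$, it therefore suffices to show that Duplicator wins the $k$-round two-pebble game on $W_1$ and $W_2$.

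Both words consist of the common prefix $x$, a run of identical blocks $g$, the common suffix $y$, and --- in $W_2$ only --- twice as many blocks together with one extra $v$ inserted between two of them. Duplicator keeps a correspondence between blocks that respects their left-to-right order and sends each pebbled position to a copy of the same letter in the same relative place inside a block (positions in $x$ and $y$ are matched to themselves). Since all blocks are equal and $N$ is far larger than the number of rounds, she can always answer a move of Spoiler by a position lying in a block in the correct order-region, so $<$ is preserved; this part is identical to the \efgame argument behind the characterization of \fodw by \DA.

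The point requiring the idempotent $e$ is the predicate \suc, and this is the step I expect to be the main obstacle. The inserted $v$ of $W_2$ occurs in the local context $w\,v\,w$, which is exactly the context of every $v$ inside a block $g=wuwvw$; so a pebble on the inserted $v$ can be answered by a pebble on an internal $v$ having the same neighbouring letters. More generally, whenever Spoiler uses \suc to crawl step by step away from a pebbled position, the two pebbles remain inside one $w$-factor; since the same word $w$ occurs on both sides and $|w|\ge k$, no crawl of length at most $k$ can traverse a whole buffer, and so it cannot detect that the junction of $W_2$ carries one more consecutive $w$-factor than any junction of $W_1$. Checking that the buffers absorb in this way all successor information visible within $k$ rounds is the delicate part; once it is done, every partial map produced preserves \suc as well as $<$, Duplicator wins, and~\eqref{eqdad} follows.
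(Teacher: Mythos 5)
Your proposal is correct and takes essentially the same route as the paper: the paper likewise reduces~\eqref{eqdad} to a $k$-round two-pebble \efgame on $u_\ell(E^kUE^kVE^k)^{k\omega}u_r$ versus $u_\ell(E^kUE^kVE^k)^{k\omega}V(E^kUE^kVE^k)^{k\omega}u_r$, using long idempotent buffers to absorb successor moves and an order-preserving block correspondence to handle $<$. The verification you flag as the delicate part is exactly what the paper packages into its inductive invariant $H(i)$ (same label, same relative position within a copy of $E$, $U$, $V$, $u_\ell$ or $u_r$, and equal distance to the nearest end when fewer than $i$ blocks remain on that side) and likewise leaves as a routine check.
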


The remainder of this section is devoted to the proof of the other direction. We formalize it with the following proposition.

\begin{proposition} \label{prop-if-fodwn}
	Consider a finite monoid $M$, a morphism $\alpha: A^* \to M$ and $S = \alpha(A^+)$. Moreover, assume that $S$ satisfies~\eqref{eqdad}. Then, any language recognized by $\alpha$ is definable in \fodwn.
\end{proposition}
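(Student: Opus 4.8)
The plan is to reduce the construction of an \fodwn formula over $A$ to the construction of an \fodw formula over a richer alphabet that internalises the successor relation. Since \fodwn is trivially closed under finite union (disjunction of sentences), it suffices to define each $\alpha^{-1}(m)$ for $m \in M$. Fix an integer $k$ (to be chosen as a function of $\omega$) and enrich every word $w = a_1 \cdots a_n$ into a word $\widehat{w}$ over the alphabet $B = (A \cup \{\#\})^{2k+1}$ of windows of radius $k$, where the letter at position $i$ records the block $a_{i-k} \cdots a_{i+k}$ (with a padding symbol $\#$ at the two ends). The crucial observation is that the map $w \mapsto \widehat{w}$ is definable in \fodwn: using \suc one reads the $k$ neighbours of a position on each side, so every unary $B$-predicate is expressible by an \fodwn formula over $A$. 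Consequently, if the enriched image $\widehat{L} = \{\widehat{w} : w \in L\}$ is definable in \fodw, then substituting these \fodwn formulas for the $B$-predicates turns an \fodw sentence for $\widehat{L}$ into an \fodwn sentence for $L$.

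It then remains to show that $\widehat{L}$ is definable in \fodw. I would exhibit a morphism $\beta$ recognising $\widehat{L}$ and build the formula by induction, using the two closure properties already isolated: Lemma~\ref{lem:concat} supplies the inductive step for the unambiguous ``first occurrence'' decomposition $w = KaL$ that is characteristic of \DA-languages, while Lemma~\ref{lem:union} assembles the finitely many cases. The point of passing to $B$ is that a position now carries its local context, so an order-only formula over $B$ recovers the effect of consecutive letters that, over $A$, would require the successor predicate.

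The real work, and the step I expect to be the main obstacle, is to show that this enriched language is genuinely \fodw-expressible, i.e. that the local behaviour of $\alpha$ becomes a \DA-behaviour once the window information is available. Here equation~\eqref{eqdad} is the essential input. I would factor $w$ at its \emph{stable} positions --- positions where the surrounding window has stabilised so that the value read between two such positions has the form $ese$ with $e = e^2$ idempotent --- and read off from $\widehat{w}$ the sequence of blocks between consecutive stable positions. In the form $(esete)^\omega t (esete)^\omega = (esete)^\omega$, the identity says exactly that between stable idempotent contexts the product forgets the order in which the non-idempotent parts occur, which is precisely the \DA identity transported to the enriched alphabet. The delicate points are to choose $k$ large enough that every sufficiently long factor stabilises to an idempotent context (so that the $\omega$-power is available), to verify that the translated identity really is the plain \DA equation $(\sigma\tau)^\omega \tau (\sigma\tau)^\omega = (\sigma\tau)^\omega$ for the morphism on $B$, and to control the two ends of the word, where the enrichment is not morphic because the windows are truncated. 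Once these boundary effects are absorbed --- for instance by treating the first and last $k$ letters as a bounded prefix and suffix handled separately --- the induction closes and yields the desired \fodw formula.
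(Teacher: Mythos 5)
Your high-level plan is the same as the paper's: enrich each word with bounded local information so that successor becomes redundant, check that the enrichment is computable position by position in \fodwn, and reduce to an \fodw-definability problem over the enriched alphabet where equation~\eqref{eqdad} can be exploited. The paper's map $\eta$ (Lemma~\ref{lem:main1}) plays exactly the role of your window map, except that instead of raw sliding windows it records, for each block of a bounded-length factorization obtained by pigeonhole, a triple $(e,\alpha(w_k),f)$ consisting of the block's image framed by idempotents that absorb on either side; this framing is what makes the images of enriched letters multiply correctly. One gap in your sketch is that the exact image $\widehat{L}$ is in general \emph{not} \fodw-definable over $B$: coherence of consecutive windows is itself a successor-style constraint. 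What one can and must prove is the weaker statement that some \fodw-definable $K$ \emph{coincides} with the target language on the set of well-formed (coherent) enriched words; this relaxation is harmless for your substitution step, but it has to be threaded through the whole induction, as in Lemma~\ref{lem:main2}.

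The more serious gap is the step you yourself flag as ``the main obstacle'': that \eqref{eqdad} for $\alpha$ yields the plain \DA identity, hence \fodw-definability, for the enriched morphism. Asserting that the identity ``transports'' to $B$ is essentially asserting the locality of \DA, which the paper obtains as a \emph{corollary} of this very proposition, not as an ingredient of it, and which is a nontrivial theorem in its own right. The paper fills this gap with Lemma~\ref{lem:main3}: an induction on the sub-alphabet $C$ and on the sizes of $t_1M$ and $Mt_2$, whose base case (both context elements ``saturated'') is where \eqref{eqdad} is actually invoked (Lemma~\ref{lem:basecase}), and whose inductive step isolates a letter $c$ at whose occurrence $t_1M$ strictly shrinks (Lemma~\ref{lem:theletter}) so that the first-occurrence decomposition of Lemma~\ref{lem:concat} applies and the recursion terminates. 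Without an argument of this kind your induction does not close: Lemmas~\ref{lem:union} and~\ref{lem:concat} give you the assembly tools, but nothing in your sketch produces the unambiguous decomposition or identifies the quantity that decreases.
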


We fix the morphism $\alpha: A^* \to M$ and $S = \alpha(A^+)$ satisfying~\eqref{eqdad} for the proof. Our argument is based on two steps. We first build another alphabet $B$ and a new morphism $\beta: B^* \to M$. Then, we use our hypothesis on $S$ to prove that any language recognized by $\beta$ can be ``approximated'' with another language definable in \fodw (we make this notion precise below). This suffices to show that the languages recognized by $\alpha$ are definable in \fodwn. 

\medskip

We begin with the definition of the new alphabet $B$. We let $\square$ as some symbol which does not correspond to any element in $M$. Moreover, we write $E(S)$ for the set of idempotents in the semigroup $S$ and fix an arbitrary linear order over it. Consider the new alphabet 
\begin{equation*}
B = \{(e,s,f) \mid e,f \in E(S) \cup \{\square\}, s \in M\},
\end{equation*}
Observe that the morphism $\alpha$ can be generalized as a monoid morphism $\beta : B^* \rightarrow M$. Given $e,f \in E(S)$ and $s \in M$, we let $\beta((e,s,f)) = esf$, $\beta((\square,s,f)) = sf$, $\beta((e,s,\square)) = es$ and $\beta((\square,s,\square)) = s$.

We shall mainly be interested in special words of $B^*$ that we call ``\emph{well-formed}''. A word $u= (e_0,s_0,f_0) \cdots (e_{n},s_n,f_{n}) \in B^*$ is \emph{well-formed} if and only if the three following conditions are satisfied:
\begin{enumerate}
	\item $u$ is non-empty.
	\item $e_0 = f_n = \square$.
	\item For all $i < n-1$, $f_i = e_{i+1} \in E(S)$ (in particular $e_i = f_i \neq \square$).
\end{enumerate}

Given three languages $H,K,L \subseteq B^*$, we say that \emph{$H$ coincides with $K$ over $L$} when $H \cap L = K \cap L$. In particular, when $L$ is the language of all well-formed words, we say that $H$ coincides with $K$ over well-formed words. We may now come back to the proof of Proposition~\ref{prop-if-fodwn}. It is proved as a corollary of the two following lemmas:

\begin{lemma} \label{lem:main1}
	There exists a map $\eta: A^* \to B^*$ which satisfies the two following properties:
	\begin{itemize}
		\item For every $w \in A^*$, $\eta(w)$ is well-formed and $\alpha(w) = \beta(\eta(w))$.
		\item For every language $K \subseteq B^*$ which is \fodw-definable, $\eta^{-1}(K) \subseteq A^*$ is \fodwn-definable. 
	\end{itemize}	
\end{lemma}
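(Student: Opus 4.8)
The plan is to build $\eta$ as a contraction that collapses $w$ onto the positions carrying an idempotent value, each letter of $\eta(w)$ recording a whole factor of $w$ together with the two idempotents that border it. I would call a position $i$ of $w=a_1\cdots a_n$ a \emph{cut} if $\alpha(a_i)\in E(S)$, write the cuts as $i_1<\cdots<i_k$, and factor $w=x_0\,a_{i_1}\,x_1\,a_{i_2}\cdots a_{i_k}\,x_k$ where the $x_j$ are the (possibly empty) maximal cut-free factors between consecutive cuts. Setting $g_j=\alpha(a_{i_j})\in E(S)$, I would define
\[
\eta(w)=(\square,\alpha(x_0),g_1)(g_1,\alpha(x_1),g_2)\cdots(g_{k-1},\alpha(x_{k-1}),g_k)(g_k,\alpha(x_k),\square),
\]
with the convention $\eta(w)=(\square,\alpha(w),\square)$ when $w$ has no cut.

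The first bullet is then immediate. By construction consecutive letters share their bordering idempotent and the two extreme components are $\square$, so $\eta(w)$ is well-formed; and since each $g_j$ is the image of the genuine letter $a_{i_j}$ separating $x_{j-1}$ from $x_j$, we simply have $\beta(\eta(w))=\alpha(x_0)\,g_1\,\alpha(x_1)\cdots g_k\,\alpha(x_k)=\alpha(x_0\,a_{i_1}\,x_1\cdots a_{i_k}\,x_k)=\alpha(w)$. Thus $\eta$ only re-brackets the product, and the hypothesis~\eqref{eqdad} is not needed for this direction.

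For the second bullet I would translate a given \fodw sentence $\psi$ over $B$ into an \fodwn sentence $\varphi$ over $A$ that holds on $w$ exactly when $\psi$ holds on $\eta(w)$. The letters of $\eta(w)$ are in order-preserving bijection with the factors $x_0,\dots,x_k$, so I would let the variables of $\varphi$ range over a canonical representative of each factor (its left delimiting cut, the first position standing for $x_0$); these representatives are first-order definable from the $P_a$ with $\alpha(a)\in E(S)$, and their order is the order of the corresponding letters of $\eta(w)$. I would then relativise every quantifier of $\psi$ to representatives and replace each atom $P_{(g,m,g')}(x)$ by a formula stating that $x$ is a cut of value $g$, that the next cut carries $g'$, and that the factor lying strictly between $x$ and the next cut has $\alpha$-image $m$; the condition on that factor is to be assembled from Lemmas~\ref{lem:union} and~\ref{lem:concat}.

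The hard part will be to realise this translation inside \fodwn, i.e.\ within two variables. Two features of the atom translation fight the budget: reading the value of the neighbouring cut, and testing the image of the factor delimited by $x$ and the next cut. I would absorb the former into the alphabet, enriching each representative with the data of its adjacent cuts so that such tests become unary predicates. For the latter I would use the successor relation to delimit the interval and Lemmas~\ref{lem:union} and~\ref{lem:concat} to recognise its content, taking care to put the interval formula in a guarded normal form so that, once substituted under a quantifier of $\psi$, it reuses rather than adds variables. Getting this bookkeeping right — so that the final formula genuinely lies in \fodwn — is the crux, and it is the only place where the successor predicate, absent from \fodw, is essential.
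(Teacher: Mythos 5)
There is a genuine gap, and it is located exactly where you say the crux is: the translation of the atoms $P_{(g,m,g')}(x)$. Your cuts are the positions whose \emph{single letter} has idempotent image, so the cut-free factors $x_j$ between consecutive cuts have unbounded length (consider $a^n b a^n$ where $\alpha(b)$ is idempotent but $\alpha(a)$ is not; or worse, an alphabet in which no letter at all maps into $E(S)$, in which case $\eta$ degenerates to $w \mapsto (\square,\alpha(w),\square)$ and your second bullet becomes literally the statement of Proposition~\ref{prop-if-fodwn}). Translating $P_{(g,m,g')}(x)$ then requires a formula expressing ``the factor strictly between $x$ and the next cut has $\alpha$-image $m$'', i.e.\ definability of $\alpha^{-1}(m)$ restricted to cut-free words. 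Lemmas~\ref{lem:union} and~\ref{lem:concat} only give closure under union and under \emph{marked} concatenation $KaL$ with $K$ over a sub-alphabet; they cannot produce such a formula, and no two-variable bookkeeping will, because this is (up to restricting the alphabet, possibly not at all) the very statement the whole proof is trying to establish. So your Lemma~\ref{lem:main1} statement would still be true for your $\eta$, but only because it is equivalent to the Proposition itself: the proof is circular.

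The missing idea is that the factorization must have factors of \emph{bounded} length, so that the $B$-letter sitting above a distinguished position is determined by a window of radius $|S|$ around it --- and a bounded window is exactly what \fodwn (but not \fodw) can read off with two variables and successor; this is Claim~\ref{clm:canonic} in the paper and the only place where \suc is used. The paper obtains such a factorization not by cutting at idempotent \emph{letters} but by a pigeonhole argument on prefixes: in every window $a_k\cdots a_{k+|S|}$ there are $i<j$ with $\alpha(a_k\cdots a_i)=\alpha(a_k\cdots a_j)$, hence an idempotent $e$ \emph{absorbed} on the right of $\alpha(a_k\cdots a_i)$, i.e.\ $\alpha(a_k\cdots a_i)=\alpha(a_k\cdots a_i)e$. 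Cutting at the (canonically chosen, hence locally determined) positions $i_k$ yields factors of length at most $|S|$ while the inserted idempotents cost nothing in the product, giving~\eqref{croch}. Your first bullet and the quantifier-relativization scheme for the second are fine and match the paper; it is the choice of cut points that must change for the argument to go through.
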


\begin{lemma} \label{lem:main2}
	For every $s \in M$, there exists a language $K \subseteq B^*$ which is  \fodw-definable and coincides with $\beta^{-1}(s)$ over well-formed words.
\end{lemma}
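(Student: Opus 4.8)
The plan is to reason entirely over the set $W$ of well-formed words, since $K$ need only agree with $\beta^{-1}(s)$ on $W$ and may behave arbitrarily elsewhere. On $W$ the map $\beta$ is transparent: expanding its definition and using that consecutive glue components coincide and are idempotent, a well-formed word $(e_0,s_0,f_0)\cdots(e_n,s_n,f_n)$ has $\beta$-value $s_0g_0s_1g_1\cdots g_{n-1}s_n$, an alternating product of arbitrary elements $s_i\in M$ and glue idempotents $g_i=f_i=e_{i+1}\in E(S)$. The task thus becomes: describe in \fodw which well-formed words realize a prescribed value of this product. The only construction primitives I will use are the two closure properties already at hand, Lemma~\ref{lem:concat} to splice \fodw pieces at a marked position and Lemma~\ref{lem:union} to range over the finitely many intermediate values in $M$.

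The algebraic engine is that~\eqref{eqdad} says precisely that every local monoid $eMe$ with $e\in E(S)$ lies in \DA: inserting copies of the bordering idempotent rewrites $(esete)^\omega t(esete)^\omega=(esete)^\omega$ into the defining identity $(xy)^\omega y(xy)^\omega=(xy)^\omega$ of \DA for $x=ese$ and $y=ete$ in $eMe$. I would run a double induction, primarily on $|M|$ and secondarily on the number of distinct glue idempotents that a word is allowed to use. For the inductive step, fix the $\Jc$-minimal glue idempotent $e$ occurring in the word; as ``the set of glue idempotents present is $C$'' depends only on which letters occur and is therefore \fodw-testable, Lemma~\ref{lem:union} reduces to a single such $e$. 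I then cut at the first and the last position whose glue equals $e$, factoring the product as $\ell\cdot e\cdot m\cdot e\cdot r$, and realize this cut at the level of languages as a marked concatenation via Lemma~\ref{lem:concat} (iterated over the finitely many possible boundary letters), taking the union over the factorizations $s=\ell\,e\,m\,e\,r$ of the target.

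The two outer factors no longer carry $e$ as a glue, hence use strictly fewer idempotents and are handled by the secondary induction hypothesis. The delicate factor is the middle one, whose alphabet need not shrink. Here I would use that it is bordered by $e$ on both sides, so its contribution is an element $eme\in eMe$, and that $eMe\in\DA$ by the engine above. After conjugating its alphabet by $e$, the middle region becomes an instance of the very same statement over the strictly smaller monoid $eMe$ (note $e\neq 1_M$ forces $|eMe|<|M|$), to which the primary induction hypothesis applies; its \fodw description is then spliced in, once more by Lemma~\ref{lem:concat}.

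I expect this middle factor to be the main obstacle. Recasting it as a genuine instance of the induction hypothesis over $eMe$ calls for building the conjugated sub-alphabet and sub-morphism, verifying that well-formedness is inherited, and using the $\Jc$-minimality of $e$ to guarantee that the interior indeed computes inside $eMe$ and that the recursion strictly decreases a parameter; the degenerate case where all glues are trivial falls under $M\in\DA$, which serves as the base of the primary induction. Once these local \DA computations are in place, assembling the outer and middle factors through Lemmas~\ref{lem:union} and~\ref{lem:concat}, and finally letting $C$ range over all admissible idempotent sets, is routine and yields the \fodw-definable language $K$ coinciding with $\beta^{-1}(s)$ over well-formed words.
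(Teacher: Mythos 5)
Your overall strategy is genuinely different from the paper's: you cut well-formed words at the first and last occurrence of a $\Jc$-minimal glue idempotent $e$ and recurse into the local monoid $eMe$, whereas the paper (Lemma~\ref{lem:main3}) runs a saturation-based induction on the triple $(|C|,|t_1M|,|Mt_2|)$, splitting at the first occurrence of a single letter $c$ witnessing non-saturation and never mentioning $\Jc$-classes, local monoids, or the variety \DA. Your algebraic observations are correct as far as they go: since $(esete)^\omega$ absorbs $e$ on both sides, \eqref{eqdad} does yield $(xy)^\omega y(xy)^\omega=(xy)^\omega$ in $eMe$ for $x=ese$, $y=ete$, and $e\neq 1_M$ does force $|eMe|<|M|$. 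The outer cuts via Lemma~\ref{lem:concat} and the union over boundary letters and factorizations of $s$ are also unproblematic.

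The gap is exactly where you say you expect an obstacle, and it is not a detail: the treatment of the middle factor is the entire difficulty of the lemma, and your proposal does not contain an argument for it. Concretely, to view the middle region as an instance over $eMe$ you must re-block it at \emph{every} occurrence of glue $e$, obtaining a word over a new finite alphabet of block-values in $eMe$; but (i) in that recast word all glues equal $e=1_{eMe}$, so you do not re-enter your own lemma non-trivially --- you land on the statement ``every language recognized by a morphism into $eMe\in\DA$ is \fodw-definable'', i.e.\ the Thérien--Wilke theorem, which you are importing as a black box while the paper's proof is deliberately self-contained and never uses it; and (ii) you must pull an \fodw sentence over the block alphabet back to an \fodw formula over $C$, which means relativizing quantifiers to the interval between consecutive $e$-glued letters and testing each block's value by the secondary induction hypothesis \emph{restricted to that block}, all with two variables and no successor. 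This transfer is the analogue of the second assertion of Lemma~\ref{lem:main1}, but there the paper is allowed to use \fodwn and a bounded-window argument; here you must stay inside \fodw, and nothing in your proposal establishes that the relativization can be done with two variables. Until that transfer lemma is stated and proved, the induction does not close. The paper's route sidesteps the issue entirely: its only splicing operation is the single marked concatenation of Lemma~\ref{lem:concat}, applied at the leftmost occurrence of one letter, at the price of tracking the contexts $t_1,t_2$ and the ideals $t_1M$, $Mt_2$ as induction parameters.
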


Before proving the lemmas, let us use them to finish the proof of Proposition~\ref{prop-if-fodwn}. Let $L \subseteq A^*$ which is recognized by $\alpha$. We have to show that $L$ is \fodwn-definable. By definition, we have $F \subseteq M$ such that $L = \alpha^{-1}(F)$. Consequently,
\[
L = \bigcup_{s \in F} \alpha^{-1}(s)
\]
By Lemma~\ref{lem:union}, it remains to show that $\alpha^{-1}(s)$ is \fodwn-definable for every $s \in M$. By Lemma~\ref{lem:main2}, we get $K \subseteq B^*$ which is \fodw-definable and coincides with $\beta^{-1}(s)$ over well-formed words. One may verify from the first assertion in Lemma~\ref{lem:main1} that $\eta^{-1}(K) = \eta^{-1}(\beta^{-1}(s)) =  \alpha^{-1}(s)$. Moreover, it follows from the second assertion in Lemma~\ref{lem:main1} that $\eta^{-1}(K)$ is \fodwn-definable. Altogether, we get that $\alpha^{-1}(s)$ is \fodwn-definable, concluding the proof.

It remains to prove Lemma~\ref{lem:main1} and Lemma~\ref{lem:main2}. We devote a subsection to each proof.

\subsection{Proof of Lemma~\ref{lem:main1}}

We have to define a map $\eta: A^* \to B^*$ satisfying the two assertions in the lemma. Let us point out beforehand that $\eta$ will \textbf{not} be a morphism. The definition is inspired by~\cite{Str85}.

Consider a word $w \in A^*$. We define $\eta(w)$. If $w$ has length smaller than $|S|$ then $\eta(w) = (\square,\alpha(w),\square)$.

Otherwise, assume that $w=a_1 \cdots a_\ell$ with $\ell > |S|$. Fix $k$ such that $1\leq k \leq \ell- |S|$. It follows from a pigeon-hole principle argument that there exist $k \leq i < j \leq k + |S|$ such that: $\alpha(a_k\cdots a_{i}) = \alpha(a_k \cdots a_{j})$. We then have $\alpha(a_k\cdots a_{i}) = \alpha(a_{k}\cdots a_{i}) (\alpha(a_{i+1} \cdots a_{j}))^{\omega}$. This implies that there is an idempotent $e$ such that $\alpha(a_k\cdots a_{i})=\alpha(a_{k}\cdots a_{i})e$. We set $i_k$ as the smallest such $i \geq k$ and $e_k$ as smallest such idempotent for $i_k$. Doing this for all $k$ yields a set $\{i_1,\dots,i_{\ell-|S|}\}$ of indices together with associated idempotents: $e_1,\dots,e_{\ell-S}$. Observe that it may happen that $i_k = i_{k+1}$. For this reason we rename the set of indices as $\{j_1,\dots,j_h\} = \{i_1,\dots,i_{\ell-|S|}\}$ with associated idempotents $f_1,\dots,f_h$ and such that for all $k$, $j_k < j_{k+1}$. 

We then decompose $w$ as $w=w_1\cdots w_{h+1}$ where: $w_1= a_1 \dots a_{j_1} \in A^+$, for all $k \in \{1,\dots,h\}$, $w_k = a_{j_{k-1}+1}\cdots a_{j_k} \in A^+$ and $w_{h+1} = a_{j_{h}+1} \cdots a_\ell \in A^+$. Observe that by construction, for all $k$,  $w_k$ has length smaller than $|S|$ and
\begin{equation}\label{croch}
\alpha(w) = \alpha(w_1)f_1\alpha(w_1) \cdots f_h \alpha(w_{h+1})
\end{equation}
We define $\eta(w) = b_1 \cdots b_{h+1} \in B^*$ with $b_k=(f_{k-1},\alpha(w_k),f_k)$ (we let $f_{0}= f_{h+1} = \square$). This concludes the definition of $\eta: A^* \to B^*$. Before we show that the two assertions in Lemma~\ref{lem:main1} are satisfied, let us provide some more terminology that we shall need for this proof.

Consider a word $w \in A^*$ and the construction described above. We say that a
position $x$ in $w$ is \emph{distinguished} if it corresponds to the leftmost
position of one of the factors $w_k$ of $w$. To any distinguished position $x$
in $w$, one can associate the corresponding position $\widehat{x}$ in
$\eta(w)$.

The following observation will be crucial in the proof. It essentially states that one can test in \fodwn whether a position $x$ of a word in $A^+$ is distinguished as well as the label of the corresponding position $\hat{x}$ in $\croch w$.

\begin{claim} \label{clm:canonic}
	For any $b \in B$ there exists a formula $\alpha_b(x)$ of \fodwn such that
	for any $w \in A^+$ and any position $x$ of $w$ we have
	
	$w \models \alpha_b(x)$ if and only if $x$ is a distinguished position of $w$ such that
	$\widehat{x}$ has label $b$ in $\croch w$.
\end{claim}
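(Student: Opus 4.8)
The plan is to prove that both the distinguished-ness of a position $x$ and, when $x$ is distinguished, the label of the associated position $\widehat x$ in $\croch w$, are determined by a window of $w$ of bounded radius around $x$, and then to translate such a local property into \fodwn.

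\medskip
\noindent\emph{Locality of the construction.} Unfolding the definition of $\eta$, the index $i_k$ satisfies $k \le i_k < k+|S|$ and is computed solely from the factor $a_k\cdots a_{k+|S|}$: one reads the prefix products $\alpha(a_k\cdots a_i)$ for $i$ in this range, takes the least $i$ admitting an idempotent $e$ with $\alpha(a_k\cdots a_i)=\alpha(a_k\cdots a_i)e$, and the least such $e$ in the fixed order. Hence a position $p$ is one of the indices $i_k$ exactly when $i_k=p$ for some admissible $k\in[\,p-|S|+1,\,p\,]$, a condition depending only on the letters within distance $|S|$ of $p$ and, near the ends of $w$, on the distance from $p$ to those ends (this is all that truncates the admissible range $1\le k\le \ell-|S|$ of $k$). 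Since $x$ is distinguished iff $x=1$ or $x-1$ is such an index, distinguished-ness is a bounded-window property of $x$.

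\medskip
\noindent\emph{The label.} Fix a distinguished $x$, the leftmost position of a factor $w_k$. As every factor has length at most $|S|$, the next distinguished position $x'$ (if any) lies within distance $|S|+1$ to the right of $x$, so the whole factor $w_k=a_x\cdots a_{x'-1}$, hence its image $\alpha(w_k)$, is read off a bounded window. The idempotent components are equally local: $f_{k-1}=\square$ when $x=1$, and otherwise it is the idempotent the construction attaches to the boundary $x-1$ (resolving any coincidence $i_k=i_{k'}$ by its convention), which in any case is read from the boundedly-long prefixes ending at $x-1$; symmetrically $f_k=\square$ when $w_k$ is the last factor—detected by reaching the right end within distance $|S|$—and otherwise it is the idempotent at $x'-1$. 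The short-word case $\ell<|S|$ is subsumed: the whole word fits in the window, $x=1$ is the only distinguished position, and its label is $(\square,\alpha(w),\square)$. Thus there is a constant $r=O(|S|)$, depending only on $\alpha$, such that ``$x$ is distinguished and $\widehat x$ has label $b$'' is a function of the isomorphism type of the radius-$r$ window around $x$, with the ends of $w$ marked whenever they fall inside it.

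\medskip
\noindent\emph{From local types to \fodwn.} It remains to observe that every such radius-$r$ window property is \fodwn-definable. For each $d\le r$, ``the position $d$ successor-steps to the right of $x$ exists and carries letter $a$'' is expressible with only two variables, by quantifying $x$ and $y$ alternately along $\suc$, and symmetrically on the left; ``$x$ is within distance $d$ of the left (resp.\ right) end'' is expressible using the position with no $\suc$-predecessor (resp.\ no $\suc$-successor). There are finitely many window types; for each type yielding ``distinguished with label $b$'' we write the conjunction describing its letters and its marked ends, the condition $\alpha(w_k)=s$ becoming a finite disjunction over the boundedly-many letter sequences whose product is $s$, and we set $\alpha_b(x)$ to be the disjunction over all these types.

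\medskip
The main obstacle is the first, bookkeeping-heavy step: verifying that the apparently global ingredients of the construction—the choice of $i_k$ and of its associated idempotent, the deduplication of the $i_k$ into the $j_m$, and the truncation of the admissible $k$-range near the endpoints—are all resolved inside a bounded window; the translation of bounded-window properties, by contrast, is routine once one takes care to perform the $\Theta(|S|)$ successor walks while respecting the two-variable restriction.
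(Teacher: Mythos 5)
Your proof is correct and takes essentially the same approach as the paper, whose entire argument for this claim is the one-sentence sketch that ``the neighborhood of $x$ of size $|S|$ determines whether $x$ is distinguished and the label of $\widehat{x}$''; you supply the missing details, namely why the choice of $i_k$, its idempotent, the deduplication, and the endpoint truncation are all resolved inside a window of radius $O(|S|)$, and why any bounded-window property is \fodwn-definable via alternating two-variable successor walks. The one genuine subtlety --- that the paper never specifies which idempotent $f_m$ to attach when $i_k = i_{k'}$ but $e_k \neq e_{k'}$ --- is correctly flagged by you, and since any fixed convention is still computed from the same bounded window, your argument goes through.
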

\begin{proof}[Proof sketch] This is because by construction the
	neighborhood of $x$ of size $|S|$ determines whether $x$ is distinguished and
	the label of $\widehat{x}$.  
\end{proof}

We may now prove that the two assertions in Lemma~\ref{lem:main1} are satisfied. Observe that for any $w \in A^*$, $\eta(w)$ is well-formed by construction and by~\eqref{croch}, we have $\beta(\eta(w))=\alpha(w)$. Consequently, the first assertion in Lemma~\ref{lem:main1} is satisfied. We now concentrate on proving the second assertion. 

Consider a language $K \subseteq B^*$ which is \fodw-definable. We have to show that the language $\eta^{-1}(K)$ is \fodwn-definable. By hypothesis, we have a formula $\varphi$ of \fodw defining $K$. We use $\varphi$ to construct $\psi \in \fodwn$ defining $\eta^{-1}(K)$. The construction is based on Claim~\ref{clm:canonic}.

We know from Claim~\ref{clm:canonic} that being a distinguished position is definable in \fodwn. Let $\psi$ be the formula constructed from $\varphi$ by restricting all quantifications to quantifications over distinguished positions and replacing all tests $P_b(x)$ by $\alpha_b(x)$. It is immediate from Claim~\ref{clm:canonic} that $\psi$ defines $\eta^{-1}(K)$.

\subsection{Proof of Lemma~\ref{lem:main2}}

We have to show that for every $s \in M$, $\beta^{-1}(s) \subseteq B^*$ coincides over well-formed words with a language definable in \fodw . The proof requires to consider words in $B^*$ that are slightly more general than well-formed words. They correspond to infixes of well-formed words:

A word $w \in B^*$ is pseudo well-formed if either $w = \varepsilon$ or $w = (e_0,s_0,f_0) \cdots (e_{n},s_n,f_{n}) \in B^+$ where for all $i < n-1$, $f_i = e_{i+1} \in E(S)$. Observe that here is no constraint on $e_0$ and $f_n$, they may be any element in $E(S) \cup \{\square\}$. We call $e_0$ the left guard of $w$ and $f_n$ its right guard (they are undefined if $w = \varepsilon$).

We now present three sets of pseudo well-formed words that we shall use in the proof. Consider two elements $e,f \in E(S) \cup \{\square\}$ and a sub-alphabet $C \subseteq B$. We define three sets of words in $C^*$: $P^C[e]$, $S^C[f]$ and $T^C[e,f]$:
\begin{itemize}
	\item If $e\in E(S)$ then $P^C[e]$ contains the empty word $\epsilon$ and all pseudo-well
          words whose right guard is $e$. If $e=\square$, then $P^C[\square] = \{\varepsilon\}$.
	
	\item If $e\in E(S)$ then $S^C[f]$ contains the empty word $\epsilon$ and all pseudo-well
          words whose left guard is $f$. If $e=\square$ then $S^C[\square] = \{\varepsilon\}$.
	
	\item $T^C[e,f]$ contains all non-empty pseudo well-formed words with
          left guard $e$ and right guard $f$. Additionally, if $e = f \neq
          \square$, then we add the empty word $\varepsilon$ to $T^C[e,f]$.	
\end{itemize}
Observe that by definition, $T^B[\square,\square]$ is the set of all well-formed words in $B^*$.

We may now come back to the proof of Lemma~\ref{lem:main2}. Consider $C \subseteq B$ and $t_1,t_2,s \in M$. We define,
\[
L^C_{s}[t_1,t_2] = \{u \in C^* \mid t_1 \cdot \beta(u) \cdot t_2 =s\}	
\]
Observe that for all $e \in E(S) \cup \{\square\}$, $1_M=\beta(\epsilon)\in P^C(e)$.
Observe also that $L^B_{s}[1_M,1_M] = \beta^{-1}(s)$. We prove Lemma~\ref{lem:main2} as a corollary of the following lemma which we prove by induction.

\begin{lemma} \label{lem:main3}
	Let $C \subseteq B$. Consider $e_1,e_2 \in E(S) \cup \{\square\}$, $t_1 \in \beta(P^C[e_1])$, $t_2 \in \beta(S^C[e_2])$. For every $s \in M$. There exists $K \subseteq C^*$ definable in \fodw which coincides with $L^C_{s}[t_1,t_2]$ over $T^C[e_1,e_2]$.
\end{lemma}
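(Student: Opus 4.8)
The plan is to prove Lemma~\ref{lem:main3} by induction on the size of the sub-alphabet $C$, since the statement is phrased for arbitrary $C \subseteq B$ precisely to make such an induction available. The base case is $C = \emptyset$: then $C^*$ contains only the empty word, and the sets $P^C[e_1]$, $S^C[e_2]$, $T^C[e_1,e_2]$ and the language $L^C_s[t_1,t_2]$ are all finite and trivially determined, so a finite disjunction of \fodw formulas (indeed a constant formula) suffices. The inductive step is where the real work lies.

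For the inductive step I would pick a letter $b = (e,r,f) \in C$ and set $C' = C \setminus \{b\}$. The idea, following the standard ``first/last occurrence'' decomposition that Lemma~\ref{lem:concat} is tailored for, is to split a word $u \in C^*$ according to whether it contains the letter $b$ and, if so, around its \emph{first} occurrence of $b$ (or, symmetrically, its last). If $u$ avoids $b$ entirely, then $u \in C'^*$ and the inductive hypothesis applies directly. Otherwise write $u = u_1\, b\, u_2$ where $u_1 \in C'^*$ is the prefix before the first $b$ and $u_2 \in C^*$ is the remaining suffix. The key point is that pseudo-well-formedness forces strong compatibility constraints at the splitting point: because $u$ lies in $T^C[e_1,e_2]$, the factor $u_1$ is itself pseudo well-formed with left guard $e_1$ and a right guard that must match the left component $e$ of $b$, so $u_1 \in T^{C'}[e_1,e]$; and $u_2$ must have left guard $f$, the right component of $b$, so $u_2 \in T^C[f,e_2]$. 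Crucially $u_2$ still ranges over $C^*$, not $C'^*$, so I cannot recurse on $u_2$ directly with the same alphabet; this is exactly why the decomposition is taken at the \emph{first} occurrence of $b$, which guarantees the prefix $u_1$ lives in the smaller alphabet $C'$ while the suffix is handled by iterating the argument.

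Concretely I would express $L^C_s[t_1,t_2] \cap T^C[e_1,e_2]$ as a finite union, over all factorizations $s = s_1 \cdot \beta(b) \cdot s_2$ compatible with the guard constraints, of products of the form $K_1 \, b \, K_2$, where $K_1$ handles the $b$-free prefix via the induction hypothesis on $C'$ (updating the parameters $t_1$ and the target to absorb $\beta(u_1)$ appropriately) and $K_2$ handles the suffix. The $b$-free prefix $K_1$ is \fodw-definable by induction, and since $K_1 \subseteq C'^* = (C \setminus \{b\})^*$ the concatenation $K_1\, b\, K_2$ can be rendered in \fodw using Lemma~\ref{lem:concat}, whose hypothesis demands precisely that the left factor avoid the pivot letter. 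The finite union is then handled by Lemma~\ref{lem:union}. I expect to need the hypotheses $t_1 \in \beta(P^C[e_1])$ and $t_2 \in \beta(S^C[e_2])$ to ensure that the updated parameters passed to the recursive call still lie in the required image sets $\beta(P^{C'}[\cdot])$ and $\beta(S^{C}[\cdot])$, so that the inductive hypothesis genuinely applies.

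The main obstacle I anticipate is not the concatenation bookkeeping but controlling the suffix $u_2$, which does not shrink its alphabet. The resolution is to treat the pivot letter $b$ asymmetrically: recurse on the $b$-free side while keeping the other side in a form to which the induction still applies, and here is where hypothesis~\eqref{eqdad} must finally be used. Because $b = (e,r,f)$ carries idempotent guards, the algebraic content of $\beta$ on well-formed words is of the shape $\cdots e \, r \, f \cdots$ with $e,f$ idempotent, and the identity $(esete)^\omega s (esete)^\omega = (esete)^\omega$ is exactly what lets me collapse the possibly long, repetitive middle of a pseudo well-formed word into finitely many \fodw-testable patterns, so that the recursion terminates and only finitely many cases arise. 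Verifying that the equation yields precisely the needed collapse — identifying which idempotent $e$ and elements $s,t$ to instantiate it with at each splitting point — is the delicate step, and I would isolate it as the crux of the argument before assembling the formula with Lemmas~\ref{lem:union} and~\ref{lem:concat}.
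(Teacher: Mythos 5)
You have correctly identified the overall shape of the argument (split at the first occurrence of a pivot letter, handle the $b$-free prefix by induction on the alphabet, reassemble with Lemmas~\ref{lem:union} and~\ref{lem:concat}), and you have also correctly identified where the difficulty lies: the suffix $u_2$ still ranges over all of $C^*$, so nothing decreases when you recurse on it. But you do not resolve that difficulty, and the resolution is the actual content of the proof. The paper's induction is not on $|C|$ alone but on the triple $\bigl(|C|,\,|t_1M|,\,|Mt_2|\bigr)$ ordered lexicographically. The recursive call on the suffix keeps the alphabet $C$ but replaces $t_1$ by $t_1 r \beta(c)$, and the point is that $|t_1 r \beta(c) M| < |t_1 M|$, so the second parameter strictly decreases. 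For that to work the pivot letter cannot be arbitrary: it is produced by Lemma~\ref{lem:theletter}, which says that if $t_1$ is not \emph{left saturated} (i.e.\ there is some $u \in T^C[e_1,f]$ with $t_1 \notin t_1\beta(u)M$) then there is a single letter $c=(e,x,f)$ such that $t_1 \notin t_1\beta(vc)M$ for \emph{every} admissible prefix $v$ — and proving this uniformity is one of the two places where~\eqref{eqdad} is genuinely used, via the idempotent guard $e$ of $c$. Your proposal instead picks $b \in C$ arbitrarily and appeals to~\eqref{eqdad} to ``collapse the middle,'' which is not a termination argument; as written the recursion on $u_2$ does not terminate.

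The second missing piece is the base case. It is not $C=\emptyset$; it is the case where $t_1$ is left saturated and $t_2$ is right saturated, meaning no letter of $C$ can shrink $t_1M$ or $Mt_2$ any further. There Lemma~\ref{lem:basecase} shows — this is the other essential use of~\eqref{eqdad}, via the computation $t_1\beta(w)t_2 = y^\omega\, t_1\beta(w)t_2\, x^{\omega}= y^\omega\, t_1\beta(w)t_2\, x^{\omega+1}$ — that $t_1\beta(w)t_2$ takes a single constant value $r$ on all of $T^C[e_1,e_2]$, so $K$ is simply $C^*$ or $\emptyset$. Without the saturation dichotomy you have neither a base case for the suffix recursion nor a place where the equation does concrete work: the instantiation of $e$, $s$, $t$ in~\eqref{eqdad} that you flag as ``the delicate step'' is exactly what Lemmas~\ref{lem:basecase} and~\ref{lem:theletter} carry out, and deferring it leaves the proof with a genuine hole rather than a routine verification.
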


Before we prove Lemma~\ref{lem:main3}, we use it to finish the main argument for Lemma~\ref{lem:main2}. Consider $s \in M$. We apply the lemma in the case when $C =B$, $e_1 = e_2 = \square$, and $t_1 = t_2 = 1_M$. This yields $K \subseteq B^*$ definable in \fodw which coincides with $L^B_{s}[1_M,1_M]$ over $T^B[\square,\square]$. This exactly says that $K \subseteq B^*$ is definable in \fodw and coincides with $\beta^{-1}(s)$ over well-formed words, concluding the proof of Lemma~\ref{lem:main2}.

\medskip

We now concentrate on proving Lemma~\ref{lem:main3}. We fix $C \subseteq B$, $e_1,e_2 \in E(S) \cup \{\square\}$, $t_1 \in \beta(P^C[e_1])$, $t_2 \in \beta(S^C[e_2])$. Finally let $s \in M$. We have to construct the language $K \subseteq C^*$ described in the lemma. The argument is an induction on the three following parameters listed by order of importance:
\begin{enumerate}
\item $|C|$.
\item $|t_1M|$.
\item $|Mt_2|$.
\end{enumerate}
We distinguish two cases based on the following definitions. 
\begin{itemize}
	\item We say that $t_1$ is left saturated when for every $f \in E(S) \cup \{\square\}$ and every $u \in T^C[e_1,f]$, $t_1 \in t_1\beta(u)M$.
	\item We say that $t_2$ is right saturated when for every $f \in E(S) \cup \{\square\}$ and every $u \in T^C[f,e_2]$, $t_2 \in M\beta(u) t_2$.
\end{itemize}
We start with the base which happens when $t_1$ and $t_2$ are respectively left and right saturated. Then, we use induction to handle the case when either $t_1$ is not left saturated or $t_2$ is not right saturated.

\medskip\noindent
{\bf Base case: $t_1$ is left saturated and $t_2$ is right saturated.} We use our hypothesis to prove the following lemma:

\begin{lemma} \label{lem:basecase}
	There exists $r \in M$ such that for $t_1 \beta(w) t_2 = r$ for every $w \in T^C[e_1,e_2]$.
\end{lemma}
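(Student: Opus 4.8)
The plan is to show that the value $t_1\beta(w)t_2$ cannot depend on $w$ by using the two saturation hypotheses to pin down the ``context'' around $\beta(w)$ and the identity~\eqref{eqdad} to erase the part of $\beta(w)$ that this context does not already control. First I would record the elementary absorption facts coming from the guards: since $t_1\in\beta(P^C[e_1])$ the element $t_1$ is the image of a word whose right guard is $e_1$, so $t_1e_1=t_1$ when $e_1\in E(S)$ (and $t_1=1_M$ when $e_1=\square$), and symmetrically $e_2t_2=t_2$ (or $t_2=1_M$). I would also note that for every $w\in T^C[e_1,e_2]$ one has $e_1\beta(w)=\beta(w)=\beta(w)e_2$. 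Next I would restate saturation in terms of Green's relations: left saturation says $t_1\mathrel{\mathcal R}t_1\beta(u)$ for every $u\in T^C[e_1,f]$, and right saturation says $\beta(u)t_2\mathrel{\mathcal L}t_2$ for every $u\in T^C[f,e_2]$. Finally I would observe that~\eqref{eqdad}, applied with a fixed idempotent $e$, already forces every local monoid $eSe$ to lie in \DA, and in particular makes $S$ (hence $M$) aperiodic, so that $x^{\omega+1}=x^{\omega}$ holds throughout.

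Second, I would reduce the whole statement to a single ``middle is irrelevant'' assertion. Fix $w,w'\in T^C[e_1,e_2]$ and put $V=\beta(w')$. Using $t_1\mathrel{\mathcal R}t_1V$ and $Vt_2\mathrel{\mathcal L}t_2$ I can write $t_1=t_1Vc$ and $t_2=dVt_2$ for suitable $c,d\in M$. Substituting these into both $t_1\beta(w)t_2$ and $t_1\beta(w')t_2$ rewrites them as $t_1V\,z\,Vt_2$ with $z=c\beta(w)d$ and $z=c\beta(w')d$ respectively. Hence it suffices to prove that $t_1V\,z\,Vt_2$ is the same for all $z\in M$; the common value then serves as the $r$ of the lemma. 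Moreover, since $V=e_1V=Ve_2$, I may replace $z$ by $e_2ze_1$ without changing $VzV$, so I can assume $z\in e_2Me_1$.

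Third, I would prove this $z$-independence with~\eqref{eqdad}. Starting from $t_1Vc=t_1$ and $t_1e_1=t_1$, post-multiplying the witness by $e_1$ yields a genuine loop $\ell_1=e_1Vce_1\in e_1Se_1$ with $t_1\ell_1=t_1$; its idempotent power $g_1:=\ell_1^{\omega}\in E(e_1Se_1)$ then satisfies $t_1g_1=t_1$ and $g_1\in VM$, say $g_1=Vr_1$ with $Vr_1=g_1$. Symmetrically I obtain $g_2\in E(e_2Se_2)$ with $g_2t_2=t_2$ and $g_2\in MV$. Inserting these idempotents gives $t_1VzVt_2=t_1\,(g_1VzVg_2)\,t_2$, and the factor $g_1VzVg_2$ lives in the neighbourhood of $g_1$. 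I would then pass to the local monoid $g_1Sg_1$, which satisfies the \DA identity directly from~\eqref{eqdad} (applied with the idempotent $e=g_1$), form the loop $g_1Vzg_1\in g_1Sg_1$, and use that identity to delete $z$, showing that $g_1VzVg_2$ equals the value obtained for the trivial connector $z_0=e_2e_1$, independently of $z$.

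The main obstacle is exactly this last application of~\eqref{eqdad}. The identity is written around a single idempotent in the shape $(esete)^{\omega}t(esete)^{\omega}=(esete)^{\omega}$, whereas $\beta(w)$ genuinely runs between two different guards $e_1\neq e_2$; the delicate point is the idempotent bookkeeping that turns the non-loop factor $V\colon e_1\to e_2$, together with the return elements $r_1,r_2$ supplied by saturation, into the exact sandwiched pattern the identity consumes, and then checks that the foreign element $z$ (which is not the image of any $C$-word, hence invisible to saturation) is erased. This is the only place where the full strength of~\eqref{eqdad}, rather than a weaker aperiodicity-type identity, is needed. I would dispose of the boundary cases $e_1=\square$ or $e_2=\square$ separately: there $t_1=1_M$ or $t_2=1_M$, one side of the sandwich disappears, and the same argument applies on the remaining side.
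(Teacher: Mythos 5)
There is a genuine gap, and it sits exactly at your reduction step. After writing $t_1=t_1Vc$ and $t_2=dVt_2$ (which is a correct use of saturation applied to the \emph{fixed} word $w'$), you claim it suffices to show that $t_1VzVt_2$ is the same for \emph{all} $z\in M$. That statement is false in general: take $M=\{1_M,g,0\}$ with $g$ idempotent and $0$ a zero, which is aperiodic and satisfies~\eqref{eqdad}, and set $t_1=t_2=V=g$, $c=d=1_M$; then $t_1=t_1Vc$ and $t_2=dVt_2$ hold, yet $t_1V1_MVt_2=g$ while $t_1V0Vt_2=0$. Restricting to $z\in e_2Me_1$ does not help ($g0g=0$). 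The problem is that by absorbing $\beta(w)$ into an anonymous $z$ you throw away the one piece of information that makes the lemma true, namely that the saturation hypotheses also apply to the \emph{varying} word $w$: left saturation applied to $wv_2$ (where $t_2=\beta(v_2)$) forces $t_1\in t_1\beta(w)t_2M$, which pins $t_1\beta(w)t_2$ to the $\mathcal{R}$-class of $t_1$ and prevents it from falling to a lower $\mathcal{J}$-class the way an arbitrary $z$ can. Consequently the subsequent machinery (local monoids $g_1Sg_1$, the \DA identity ``deleting'' $z$) is aimed at an unprovable target, and indeed that last deletion step is never actually carried out in your sketch.

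For comparison, the paper's proof is both shorter and weaker in what it uses. Writing $a=t_1\beta(w)t_2$ and $b=t_1\beta(w')t_2$, left saturation on $wv_2$ gives $t_1=ax_0$, hence $b=ax$ with $x=x_0\beta(w')t_2$, and right saturation on $v_1w'$ gives $t_2=y_0b$, hence $a=t_1\beta(w)y_0b=y\,a\,x$. Then $a=y^{\omega}ax^{\omega}=y^{\omega}ax^{\omega+1}=ax=b$, using only $x^{\omega+1}=x^{\omega}$, i.e.\ aperiodicity --- which, as you correctly observed, follows from~\eqref{eqdad}. (In Green-relation terms: both values land in $R_{t_1}\cap L_{t_2}$, a single $\mathcal{H}$-class, which is trivial by aperiodicity.) So the full strength of~\eqref{eqdad} is \emph{not} needed in this base case --- it is needed later, in the proof of Lemma~\ref{lem:theletter}. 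If you want to salvage your plan, keep your first and second paragraphs but apply saturation to $w$ as well as to $w'$ before erasing anything; the local-\DA analysis of the third paragraph can then be dropped entirely. Finally, a small separate point: your assertion $e_1\beta(w)=\beta(w)$ fails for $w=\varepsilon\in T^C[e_1,e_1]$, so the empty word needs its own sentence.
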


Before we prove the lemma, let us use it to conclude the base case. We let $r \in M$ be as defined in Lemma~\ref{lem:basecase}. If $r = s$, we define $K = C^*$ and if $r \neq s$, we define $K = \emptyset$. Clearly, $K$ is \fodw-definable in both cases. Moreover, by definition of $r$ in the lemma, it is immediate that,
\[
L^C_{s}[t_1,t_2] \cap T^C[e_1,e_2] = K \cap T^C[e_1,e_2]
\]
This exactly says that $K$ coincides with $L^C_{s}[t_1,t_2]$ over $T^C[e_1,e_2]$, finishing the proof. It remains to prove Lemma~\ref{lem:basecase}.

\begin{proof}[Proof of Lemma~\ref{lem:basecase}]
	We show that for every $w,w' \in T^C[e_1,e_2]$, we have $t_1 \beta(w) t_2 = t_1 \beta(w') t_2$. This clearly implies the lemma.

	Recall that by definition $t_1 \in \beta(P^C[e_1])$, $t_2 \in \beta(S^C[e_2])$. Hence, there exists $v_1 \in P^C[e_1]$ and $v_2 \in S^C[e_2]$ such that $t_1 = \beta(v_1)$ and $t_2 = \beta(v_2)$. This yields $f,f' \in E(S) \cup \{\square\}$ such that $v_1w' \in T^C[f,e_2]$ and $wv_2 \in T^C[e_1,f']$.
	
	Since $t_1$ and $t_2$ are right and left saturated respectively, it follows that $t_1 \in t_1\beta(wv_2)M = t_1\beta(w)t_2 M$ and $t_2 \in M\beta(v_1w')t_2 = Mt_1 \beta(w') t_2$. This yields $x,y \in M$ such that $t_1 \beta(w') t_2 = t_1 \beta(w) t_2 x$ and $t_1 \beta(w') t_2 = yt_1 \beta(w) t_2$. We now obtain,
	\[ 
	\begin{array}{lll}
		t_1 \beta(w) t_2 & = & yt_1 \beta(w) t_2 x                                                                                   \\
		                 & = & y^\omega t_1 \beta(w) t_2 x^\omega                                                                    \\
		                 & = & y^\omega t_1 \beta(w) t_2 x^{\omega+1} \text{~~~ as~\eqref{eqdad}
			implies $x^{\omega+1}=x^\omega$} \\
		                 & = & t_1 \beta(w) t_2 x                                                                                    \\
		                 & = & t_1 \beta(w') t_2  \text{~~~ by definition of $x$}
	\end{array}
	\]
This concludes the proof.	
\end{proof}

\medskip
\noindent
{\bf Induction step: Either $t_1$ is not left saturated or $t_2$ is not right saturated.} We assume that $t_1$ is not left saturated (the other case is symmetrical). We use induction on the first and second parameters (note that induction on the third parameter is used in the symmetrical case). First, we use our hypothesis to prove the following fact.

\begin{lemma} \label{lem:theletter}
	There exists $c = (e,x,f) \in C$ such that for every $v \in C^*$ satisfying $vc \in T^C[e_1,f]$, $t_1 \not\in t_1\beta(vc) M$.
\end{lemma}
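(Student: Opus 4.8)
The plan is to prove Lemma~\ref{lem:theletter} by exploiting the hypothesis that $t_1$ is \emph{not} left saturated, which by definition gives us at least one $f \in E(S) \cup \{\square\}$ and one word $u \in T^C[e_1,f]$ witnessing $t_1 \notin t_1\beta(u)M$. The key observation is that the set $t_1 M$ is a finite set of right multiples of $t_1$, and right multiplication can only shrink it: for any word $u$, $t_1\beta(u)M \subseteq t_1 M$. The condition $t_1 \in t_1\beta(u)M$ is precisely the condition that multiplying by $\beta(u)$ on the right does \emph{not} strictly shrink $t_1 M$ down (more precisely, it is equivalent to $t_1\beta(u)M = t_1 M$, since $t_1 \in t_1\beta(u)M$ forces $t_1 M \subseteq t_1\beta(u)M \subseteq t_1 M$). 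So ``not left saturated'' says exactly that \emph{some} word in some $T^C[e_1,f]$ strictly decreases $t_1 M$.

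First I would take such a witnessing word $u \in T^C[e_1,f]$ with $t_1 \notin t_1\beta(u)M$, and among all words with this property choose one that is \emph{shortest}. Since the empty word (when it lies in some $T^C[e_1,e_1]$) satisfies $t_1 \in t_1\beta(\varepsilon)M = t_1 M$ trivially, any witnessing $u$ is nonempty, so I can write $u = vc$ where $c = (e,x,f') \in C$ is its last letter and $v \in C^*$. By minimality of $u$, the proper prefix $v$ does \emph{not} strictly shrink $t_1 M$; that is, $t_1 \in t_1\beta(v)M$, equivalently $t_1\beta(v)M = t_1 M$. Then I would compute
\[
t_1 \beta(vc) M = t_1 \beta(v)\beta(c) M = t_1 \beta(c) M,
\]
using $t_1\beta(v)M = t_1 M$ in the last step, so the ``bad'' letter $c$ already witnesses the strict decrease on its own: $t_1 \notin t_1\beta(c)M = t_1\beta(vc)M$.

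The remaining point is to upgrade from this single witnessing word $vc$ to the universally quantified statement in the lemma, namely that $t_1 \notin t_1\beta(v'c)M$ for \emph{every} $v' \in C^*$ with $v'c \in T^C[e_1,f']$. This follows from the same monotonicity: for any such $v'$ we have $t_1 \beta(v'c) M \subseteq t_1 \beta(c) M$, because $t_1\beta(v')M \subseteq t_1 M$ gives $t_1\beta(v')\beta(c)M \subseteq t_1\beta(c)M$. Hence if $t_1 \in t_1\beta(v'c)M$ we would get $t_1 \in t_1\beta(c)M$, contradicting what we established for the single letter $c$. Thus $c$ is the letter required by the lemma, with $f = f'$ its right guard.

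The main obstacle I anticipate is matching the \emph{typing} constraints built into the $T^C[\cdot,\cdot]$ notation: I must make sure the chosen letter $c$ really is the one appearing as the last letter of a legitimate element of $T^C[e_1,f]$, so that the guard $f$ in the lemma's statement is exactly the right guard $f'$ of $c$, and that the well-formedness conditions linking consecutive letters are respected when passing from $vc$ to an arbitrary $v'c$. This is a bookkeeping issue about the pseudo-well-formed structure rather than an algebraic difficulty — the algebraic heart is entirely the monotonicity of $t_1 M$ under right multiplication together with the minimal-length choice of the witness.
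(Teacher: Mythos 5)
Your overall skeleton (take a shortest witness $u=vc$ of non-saturation, argue that the last letter $c$ alone already causes the drop, then transfer this to every $v'c$) is close in shape to the paper's argument, but the two algebraic steps that carry your proof are not valid. You claim that $t_1\beta(v)M = t_1M$ implies $t_1\beta(v)\beta(c)M = t_1\beta(c)M$, and later that $t_1\beta(v')M \subseteq t_1M$ implies $t_1\beta(v')\beta(c)M \subseteq t_1\beta(c)M$. Neither implication holds in a general finite monoid: $t_1\beta(v)M = t_1M$ only gives $t_1\beta(v) = t_1m'$ for some $m' \in M$, so $t_1\beta(v)\beta(c)M = t_1 m'\beta(c)M$, and $m'\beta(c)$ need not lie in $\beta(c)M$; right ideals are not ``monotone through'' an interposed factor. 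Concretely, in the full transformation monoid on three points, take $t_1$ the identity, $\beta(v)$ a transposition and $\beta(c)$ a rank-two idempotent: then $t_1\beta(v)M = t_1M = M$ while $t_1\beta(v)\beta(c)M \neq t_1\beta(c)M$. So your closing remark that ``the algebraic heart is entirely the monotonicity of $t_1M$ under right multiplication'' is precisely where the argument breaks: monotonicity only yields $t_1\beta(vc)M \subseteq t_1\beta(v)M$, which is not what you need.

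The two conclusions you want at these points happen to be true, but they require a genuinely stronger tool, namely Green's Lemma: if $aM = apM$, then right translation by $q$ restricts to a bijection of the $\mathcal{R}$-class of $a$ whenever $aM = aqM$, and from this one gets that, under $aM=apM$, one has $aM = apqM$ if and only if $aM = aqM$. Substituting that for your inclusion manipulations would make your plan work, and would in fact prove the lemma for an arbitrary finite monoid --- which should itself have been a warning sign, since your proof nowhere uses the standing hypothesis on $S$. The paper proceeds differently: it never isolates the intermediate claim $t_1 \notin t_1\beta(c)M$; instead, given any $v$ with $t_1 \in t_1\beta(vc)M$, it contradicts the minimality of $u = u'c$ directly by sandwiching with an idempotent $g$ satisfying $g\beta(c)=\beta(c)$ and $t_1\beta(u')g = t_1\beta(u')$ and then applying equation~\eqref{eqdad}. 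As written, your proof has a genuine gap: the two ideal computations are false, and the ``bookkeeping'' about guards that you flag as the main obstacle is not where the difficulty lies.
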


\begin{proof}
	By hypothesis, $t_1$ is not left saturated. Hence, there exists $u \in T^C[e_1,f]$ for some $f \in E(S) \cup \{\square\}$ such that $t_1 \not\in t_1\beta(u)M$. Note that $u$ has to be non-empty (clearly, $t_1 \in t_1M$). Finally, we may choose $u$ of minimal length: $u = u'c$ with $c = (e,x,f) \in C$ and $t_1 \in t_1\beta(u')M$. It remains to show that $c \in C$ satisfies the desired property. Consider $v \in C^*$ such that $vc \in T^C[e_1,f]$, we have to show that $t_1 \not\in t_1\beta(vc) M$. There are two cases depending on whether $e \in E(S)$ or $e = \{\square\}$.
	
	If $e = \square$, then, $u'c = u \in T^C[e_1,f]$ and $vc \in T^C[e_1,f]$ imply that $u' = v = \varepsilon$ and $e_1 = \square$. Hence $vc = u$ and we get by definition of $u$ that $t_1 \not\in t_1\beta(vc) M$. We turn to the case when $e \in E(S)$. We proceed by contradiction: assume that $t_1 \in t_1\beta(vc) M$. This yields $r \in M$ such that $t_1 = t_1\beta(vc)r$.	We have the following fact,
	
	\begin{fact} \label{fct:theidem}
		There exists an idempotent $g \in E(S)$ such that $g\beta(c) = \beta(c)$ and $t_1\beta(u')g = t_1 \beta(u')$.
	\end{fact}

	\begin{proof}
          There are two cases depending on whether $t_1\beta(u') = 1_M$ or
          not. In the former case, we get $1_M =
          t_1\beta(vc)r\beta(u')$.
          Clearly $t_1\beta(vc)r\beta(u') \in S$ since $\beta(c) \in S$
          as $e \in E(S)$. Hence, $1_M \in E(S)$ and it suffices to choose
          $g = 1_M$.
		
		We now assume that $t_1\beta(u') \neq 1_M$. We choose $g = e \in E(S)$ Clearly, $e\beta(c) = \beta(c)$ since $c = (e,x,f)$. Moreover, we have $t_1 \in \beta(P^{C}[e_1])$ by definition and $u'c = u \in T^C[e_1,f]$. This yields, $t_1 \beta(u') \in \beta(P^{C}[e])$ and since $t_1\beta(u') \neq 1_M$ this implies $t_1\beta(u')e = t_1 \beta(u')$.		
	\end{proof}
	
	We may now finish the proof. Recall that $t_1 \in t_1\beta(u')M$ by hypothesis which yields $r' \in M$ such that $t_1 = t_1\beta(u')r'$. Since we also have $t_1 = t_1\beta(vc)r$, this yields the following,
	\[
	\begin{array}{lll}
	t_1\beta(u') & = & t_1\beta(u')r'\beta(vc)r\beta(u')                                                                      \\
	& = & t_1\beta(u')gr'\beta(v)g\beta(c)r\beta(u')g \quad
	\text{Using Fact~\ref{fct:theidem}}                         \\
	& = & t_1\beta(u')(gr'\beta(v)g\beta(c)r\beta(u')g)^{\omega}                                                 \\
	& = & t_1\beta(u')\beta(c)r\beta(u')(gr'\beta(v)g\beta(c)r\beta(u')g)^\omega
	\quad\text{Using~\eqref{eqdad}}
	\end{array}
	\]	
	\noindent
	Consequently, we get $y \in M$ such that $t_1\beta(u') = t_1\beta(u'c) y = t_1\beta(u) y$. Since $t_1 = t_1\beta(u')r'$, we then obtain $t_1 = t_1\beta(u) yr'$. Hence, $t_1 \in t_1\beta(u) M$ which contradicts the definition of $u$.
\end{proof}

We may now finish the proof. We first use induction to build several \fodw-definable languages. We then combine them into another \fodw-definable language $K$ that coincides with $L^C_{s}[t_1,t_2]$ over $T^C[e_1,e_2]$ as desired. 

Let $D = C \setminus \{c\}$. We first handle the words in $D^*$: we build a language $H \subseteq D^*$ which coincides with $L^C_{s}[t_1,t_2]$ over $T^D[e_1,e_2]$. For every $r \in M$, induction on our first parameter (the size of $C$) yields a language $H_r \subseteq D^*$ definable in \fodw which coincides with $L^D_{r}[1_M,1_M]$ over $T^D[e_1,e_2]$. We define,
\[
H = \bigcup_{\{r \in M \mid t_1rt_2 = s\}} H_r
\] 
Clearly, $H \subseteq D^*$ is definable in \fodw by Lemma~\ref{lem:union}. Moreover, one may verify the following fact from the definition.

\begin{fact} \label{fct:subalph}
	$H$ coincides with $L^C_{s}[t_1,t_2]$ over $T^D[e_1,e_2]$.
\end{fact}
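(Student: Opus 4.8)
The plan is to verify the set equality $H \cap T^D[e_1,e_2] = L^C_s[t_1,t_2] \cap T^D[e_1,e_2]$ directly, by unfolding the definitions; no further use of~\eqref{eqdad} is needed here, since all the semigroup-theoretic work has already been carried out in Lemma~\ref{lem:theletter} and is packaged into the choice of $c$ and $D$. The one point to keep in mind throughout is that each $H_r$ agrees with $L^D_r[1_M,1_M]$ \emph{only} over $T^D[e_1,e_2]$ (outside this set $H_r$ is uncontrolled), so I must always intersect with $T^D[e_1,e_2]$ before manipulating the union, and never reason about the $H_r$ globally.

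First I would record that $T^D[e_1,e_2] \subseteq D^* \subseteq C^*$, so that for any $u \in T^D[e_1,e_2]$ the image $\beta(u) \in M$ is defined and $u$ automatically lies in $C^*$. Unfolding $L^D_r[1_M,1_M] = \{u \in D^* \mid \beta(u) = r\}$, the inductive hypothesis on $H_r$ gives $H_r \cap T^D[e_1,e_2] = \{u \in T^D[e_1,e_2] \mid \beta(u) = r\}$. Taking the union over $\{r \mid t_1 r t_2 = s\}$ and distributing the intersection, I obtain
\[
H \cap T^D[e_1,e_2] = \bigcup_{\{r \in M \mid t_1 r t_2 = s\}} \{u \in T^D[e_1,e_2] \mid \beta(u) = r\}.
\]
Since every word has a unique image under $\beta$, a word $u \in T^D[e_1,e_2]$ belongs to the right-hand side exactly when its image $r = \beta(u)$ satisfies $t_1 r t_2 = s$, that is, exactly when $t_1 \beta(u) t_2 = s$.

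On the other side, using $T^D[e_1,e_2] \subseteq C^*$ and unfolding $L^C_s[t_1,t_2] = \{u \in C^* \mid t_1 \beta(u) t_2 = s\}$, I get $L^C_s[t_1,t_2] \cap T^D[e_1,e_2] = \{u \in T^D[e_1,e_2] \mid t_1 \beta(u) t_2 = s\}$. The two descriptions coincide, which is precisely the assertion. There is no genuine obstacle here: the whole content is the observation that $\beta$ sorts the words of $T^D[e_1,e_2]$ by their image in $M$, and that the condition $t_1 \beta(u) t_2 = s$ is exactly the union of the fibre conditions $\beta(u) = r$ over those $r$ with $t_1 r t_2 = s$. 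The only subtlety to handle carefully is the scoping of the inductive hypothesis to $T^D[e_1,e_2]$, as flagged above.
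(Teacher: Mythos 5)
Your proof is correct and is exactly the routine verification the paper leaves implicit (it only says ``one may verify the following fact from the definition''): intersect with $T^D[e_1,e_2]\subseteq D^*\subseteq C^*$, use that each $H_r$ agrees with $L^D_r[1_M,1_M]$ there, and observe that $t_1\beta(u)t_2=s$ holds if and only if $\beta(u)=r$ for some $r$ with $t_1rt_2=s$. Your care about the scoping of the inductive hypothesis to $T^D[e_1,e_2]$ is the right point to flag, and nothing further is needed.
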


We now take care of the words in $C^* \setminus D^*$ (i.e. the ones that
contain at least one letter ``$c$''). Recall that $c = (e,x,f)$.

Let $R \subseteq M$ be as follows:
\[
R = \{\beta(v) \mid \text{$v \in C^*$ and $vc \in T^C[e_1,f]$}\}
\]

For every $r \in R$, induction on our first parameter (the size of $C$), yields
a language $U_r \subseteq D^*$ definable in \fodw which coincides with
$L^D_{r}[1_M,1_M]$ over $T^D[e_1,e]$.

Moreover, by Lemma~\ref{lem:theletter}, we know that for every $r \in R$,
$t_1 \not\in t_1r\beta(c) M$. Clearly, this yields that
$|t_1r\beta(c) M| < |t_1M|$. Hence, induction on our second parameter (the size
of $t_1M$) yields a language $V_r \subseteq C^*$ definable in \fodw which
coincides with $L^C_{s}[t_1r\beta(c),t_2]$ over $T^C[f,e_2]$.

We are now ready to define the language $K \subseteq C^*$ described in Lemma~\ref{lem:main3}. We let,
\[
K = H \cup \bigcup_{r \in R} U_r c V_r
\]
Let us first explain why $K$ is definable in \fodw. By Lemma~\ref{lem:union}, it suffices to show that every language in the union is definable in \fodw. We already know this for $H$. Moreover, given $r \in R$, $U_r,V_r$ are definable in \fodw by definition and $U_r \subseteq D^*$ with $c \not \in D$. Hence, Lemma~\ref{lem:concat} yields that $U_r c V_r$ is definable in \fodw. Altogether, we get that $K$ is definable in \fodw.

\medskip

It remains to verify that $K$ coincides with $L^C_{s}[t_1,t_2]$ over $T^C[e_1,e_2]$. Hence, we fix $w \in T^C[e_1,e_2]$ and show that $w \in K$ if and only if $w \in L^C_{s}[t_1,t_2]$.

Assume first that $w \in K$. We show that $w \in L^C_{s}[t_1,t_2]$. If $w \in H \subseteq D^*$, this is immediate by Fact~\ref{fct:subalph}. Otherwise, $w \in U_r c V_r$ for some $r \in R$. Hence, $w = w_1 c w_2$ with $w_1 \in U_r \subseteq D^*$ and $w_2 \in V_r$. Since $c = (e,x,f)$ and $w \in T^C[e_1,e_2]$, it is immediate that $w_1 \in T^D[e_1,e]$ and $w_2 \in T^C[f,e_2]$. Therefore, by definition of $U_r$ and $V_r$, we get that $w_1 \in L^D_{r}[1_M,1_M]$ (i.e. $\beta(w_1) = r$) and $w_2 \in L^C_{s}[t_1r\beta(c),t_2]$ (i.e. $t_1r\beta(c)\beta(w_2)t_2 = s$). Altogether, this yields,
\[
t_1\beta(w)t_2 = t_1\beta(w_1)\beta(c)\beta(w_2)t_2 = t_1r\beta(c)\beta(w_2)t_2 = s
\]
Hence, $w \in L^C_{s}[t_1,t_2]$ by definition.

Assume now that $w \in L^C_{s}[t_1,t_2]$. We show that $w \in K$. If $w \in D^*$, it is immediate from Fact~\ref{fct:subalph} that $w \in H \subseteq K$. Otherwise, $w$ contains the letter $c$: $w =  w_1 c w_2$ with $w_1 \in D^*$ and $w_2 \in C^*$ (i.e. the highlighted $c$ is the leftmost one in $w$). Since $w \in  T^C[e_1,e_2]$ and $c = (e,x,f)$, this yields $w_1 \in T^D[e_1,e]$, $w_1c \in T^C[e_1,f]$ and $w_2 \in T^C[f,e_2]$. In particular, $w_1c \in T^C[e_1,f]$ means that $\beta(w_1) = r \in R$ by definition. Hence, $w_1 \in L^D_{r}[1_M,1_M]$ which yields $w_1 \in U_r$ by definition of $U_r$ since $w_1 \in T^D[e_1,e]$. Moreover, since $w \in L^C_{s}[t_1,t_2]$, we have $t_1\beta(w)t_2 = s$ which yields $t_1r\beta(c) \beta(w_2) t_2 = s$. Consequently $w_2 \in  L^C_{s}[t_1r\beta(c),t_2]$ which yields $w_2 \in V_r$ by definition of $V_r$ since $w_2 \in T^C[f,e_2]$. Altogether, we obtain $w = w_1c w_2 \in H_r c V_r \subseteq K$ which concludes the proof.

\section{Proof of necessity of~\eqref{eqdad}}\label{sec-necessary}

The proof of Proposition~\ref{prop-onlyif-fodwn} is a simple classical
\efgame argument. We include a sketch below for completeness. We begin
with the definition of the \efgame game associated to \fodwn.

There are two players, Duplicator and Spoiler and the board consists
in two words and a number $k$ of rounds that is fixed in advance. At
any time during the game there is one pebble placed on a position of
one word and one pebble placed on a position of the other word and
both positions have the same label. If the initial position is not
specified, the game starts with the two pebbles placed on the first
position of each word. Each round starts with Spoiler moving one of
the pebbles inside its word from its original position $x$ to a new
position $y$. Duplicator must answer by moving the pebble in the
other word from its original position $x'$ to a new position
$y'$. Moreover, the positions $x'$ and $y'$ must satisfy the same
atomic formulas as $x$ and $y$, i.e. the same predicates among $<$,
$+1$ and the label predicates.

If at some point Duplicator cannot answer Spoiler's move, then Spoiler 
wins the game. If Duplicator is able to respond to all $k$ moves of
Spoiler then she wins the game. Winning strategies are defined as
usual. If Duplicator has a winning strategy for the $k$-round
game played on the words $w,w'$ then we say that $w$ and $w'$ are
$k$-equivalent and denote this by $w \simeq_{k}^{+} w'$. The
following result is classical and simple to prove. 

\begin{lemma}[Folklore]\label{lemma-ef-game}
If $L$ is definable in \fodwn then there is a $k$ such that $w
\simeq_{k}^{+} w'$ implies $w\in L$ if and only if $w'\in L$.
\end{lemma}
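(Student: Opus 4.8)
The plan is to prove, by induction on the number of rounds $k$, a statement that is slightly stronger than the lemma and that speaks about \fodwn formulas with a single free variable evaluated at the pebbled positions. Fix the usual notion of \emph{quantifier depth} $\mathrm{qd}(\varphi)$ of an \fodwn formula (the maximal nesting of quantifiers). For a word $w$ with a distinguished position $x$ (a game configuration) I would prove: if Duplicator wins the $k$-round game on $w,w'$ with the pebbles initially on $x$ and $x'$, then for every \fodwn formula $\varphi(z)$ having \emph{at most} one free variable with $\mathrm{qd}(\varphi)\le k$ we have $w\models\varphi(x)$ if and only if $w'\models\varphi(x')$. The lemma then follows by taking $k=\mathrm{qd}(\Phi)$ for a sentence $\Phi$ defining $L$ and starting the game on the first positions of the two words: a sentence is a Boolean combination of formulas $\exists z\,\psi(z)$, and evaluating such a top-level existential simply amounts to Spoiler's first move, whose atomic constraint to the initial position is harmless since $\psi$ does not mention that position.

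The base case $k=0$ is immediate: a quantifier-free formula with free variable $z$ is a Boolean combination of label tests $P_a(z)$, and the game maintains the invariant that the two pebbled positions carry the same label, so all these tests agree. The inductive step rests on the one genuinely two-variable observation. Because only $x$ and $y$ are available, every quantifier occurring \emph{inside} a formula rebinds one of them, so every subformula that begins with a quantifier has at most one free variable. Consequently any formula $\psi(z,z')$ with the two free variables $z,z'$ and $\mathrm{qd}(\psi)\le k$ can be written as a Boolean combination of atomic formulas relating $z$ and $z'$ together with subformulas each having at most one free variable, all of quantifier depth at most $k$. This decomposition is exactly what allows the one-pebble game to capture the logic.

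For the inductive step proper I would analyse a formula of depth $k+1$, which up to Boolean combination has the shape $\exists z'\,\psi(z,z')$ with $\mathrm{qd}(\psi)\le k$. Suppose $w\models\exists z'\,\psi(x,z')$, witnessed by a position $y$. I let Spoiler move the pebble of $w$ from $x$ to $y$ and let $y'$ be Duplicator's answer according to her winning strategy, so that the remaining $k$ rounds are won from $y,y'$, i.e. $(w,y)\simeq_{k}^{+}(w',y')$. Writing $\psi$ in the decomposed form, the three kinds of conjuncts transfer for three distinct reasons: the atomic relations between $x,y$ and between $x',y'$ agree by the rule of the game; the one-variable subformulas about the \emph{new} position agree by the induction hypothesis applied to $(w,y)\simeq_{k}^{+}(w',y')$; and the one-variable subformulas about the \emph{old} position agree by the induction hypothesis applied to $(w,x)\simeq_{k}^{+}(w',x')$, which holds because a Duplicator strategy winning $k+1$ rounds from $x,x'$ also wins $k$ rounds from the same start. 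Hence $w'\models\psi(x',y')$ and so $w'\models\exists z'\,\psi(x',z')$; the converse is symmetric, as Spoiler may play in either word.

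The main obstacle, and the only subtle point, is precisely the last reason above: the one-pebble game appears to forget $x$ after Spoiler's move, yet $\psi$ may still refer to $x$ through its one-variable conjuncts. This is reconciled by the monotonicity of the game in the number of rounds, which keeps $(w,x)\simeq_{k}^{+}(w',x')$ available for the induction hypothesis. Once this is in place, everything else — the Boolean decomposition, the transfer of atomic facts, and the reduction of a sentence to its top-level existential over the first position — is a routine structural induction on formulas.
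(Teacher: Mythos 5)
Your proof is correct; the paper gives no proof of this lemma (it is stated as folklore, "classical and simple to prove"), and your argument is exactly the standard correspondence between the one-pebble game and quantifier depth that is intended, including the right strengthening to formulas with one free variable and the decomposition of a two-free-variable $\psi(z,z')$ into atomics plus one-free-variable components. In particular you correctly identify and resolve the one non-obvious point — that after Spoiler's move the game forgets the old position while depth-$\le k$ one-variable conjuncts of $\psi$ may still refer to it — by invoking monotonicity of the game in the number of rounds to keep $(w,x)\simeq_{k}^{+}(w',x')$ available for the induction hypothesis.
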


We can now use Lemma~\ref{lemma-ef-game} to prove
Proposition~\ref{prop-onlyif-fodwn}.

\begin{proof}[Proof of Proposition~\ref{prop-onlyif-fodwn}]
Let $L$ be a language definable in \fodwn. Let $\alpha: A^*
\rightarrow M$ its syntactic morphism and $S = \alpha(A^+)$ its syntactic semigroup. Let $s$, $t$ and $e$ be 
elements of $S$ with $e$ idempotent. Let $U,V,E$ be non-empty words such that
$s=\alpha(U)$, $t=\alpha(V)$, $e=\alpha(E)$. For all $k \in \nat$, let
$w_k$ be the word $(E^kUE^kVE^k)^{k\omega}$ and let $w'_k$ be the word
$(E^kUE^kVE^k)^{k\omega}V(E^kUE^kVE^k)^{k\omega}$. Note that for all
$k$, $\alpha(w_k)$ is $(esete)^\omega$ while $\alpha(w'_k)$ is
$(esete)^\omega t(esete)^\omega$. 

In view of Lemma~\ref{lemma-ef-game}, it is enough for each number $k$
and each words $u_\ell,u_r$, to give a winning strategy for Duplicator
in the $k$-move \efgame game played on $u_\ell w_k u_r$ and $u_\ell
w'_k u_r$.

This is done by induction on the number $i$ of remaining moves.  At
each step of the game one pebble is at position $x$ of $u_\ell w_k
u_r$ and another one is at position $x'$ of $u_\ell w'_k u_r$. The
inductive hypothesis $H(i)$ that Duplicator maintains is:
\begin{enumerate}
\item $x$ and $x'$ have the same label.
\item If $x$ is in a copy of $E$ (resp. $u_\ell$, $u_r$, $U$, $V$)
  then $x'$ is in a copy of $E$ (resp. $u_\ell$, $u_r$, $U$, $V$) at
  the same relative position as $x$.
\item If $x$ has less than $i$ blocks $(E^kUE^kVE^k)$ to its left (resp. to
    its right) then $x'$ is at the same distance as $x$ from the beginning of
    the word (resp. from the end of the word).
\end{enumerate}

It is immediate to check that $H(i)$ holds at the beginning of the
game. It is also simple to verify that this inductive hypothesis can
be maintained during $k$ moves of the game.
\end{proof}

\section{Conclusion}
We have shown that languages definable in \fodwn are exactly those whose
syntactic semigroup satisfies $(esete)^\omega t(esete)^\omega =
(esete)^\omega$. In other words and with abuse of notations we have shown that
$\fodwn=\LDA$.

Recall from~\cite{fodeux} that languages definable in \fodw are exactly those
whose syntactic semigroup is in the variety \DA. From this and a ``wreath
product argument'', essentially Lemma~\ref{lem:main1}, it follows that
languages definable in \fodwn are exactly those whose syntactic semigroup is in
\DAD.

Therefore it follows from our result that $\DAD=\LDA$.
This in turns is equivalent to the locality of \DA (see for example \cite{Til87}).

\bibliographystyle{plain}
\bibliography{main}

\end{document}